\newtheorem{theorem}{Theorem}[section]
\newtheorem{corollary}[theorem]{Corollary}
\newtheorem{definition}[theorem]{Definition}
\newtheorem{example}[theorem]{Example}
\newtheorem{lemma}[theorem]{Lemma}
\newtheorem{proposition}[theorem]{Proposition}
\newtheorem{remark}[theorem]{Remark}
\newenvironment{proof}[1][Proof]{\noindent\textbf{#1.} }{\ \rule{0.5em}{0.5em}}
\begin{document}

\title{ Robust utility maximization for L\'evy processes: Penalization and
solvability}
\author{ {Daniel Hern\'{a}ndez--Hern\'{a}ndez\thanks{%
Centro de Investigaci\'{o}n en Matem\'{a}ticas, Apartado postal 402,
Guanajuato, Gto. 36000, M\'{e}xico. E-mail: dher@cimat.mx } \ \ Leonel P\'{e}%
rez-Hern\'{a}ndez\thanks{%
Departamento de Econom\'{\i}a y Finanzas, Universidad de Guanajuato, DCEA
Campus Guanajuato, C.P. 36250, Guanajuato, Gto. E-mail:
lperezhernandez@yahoo.com}}}
\maketitle

\begin{abstract}
In this paper the robust utility maximization problem for a market model
based on L\'{e}vy processes is analyzed. The interplay between the form of
the utility function and the penalization function required to have a well
posed problem is studied, and for a large class of utility functions it is
proved that the dual problem is solvable as well as the existence of optimal
solutions. The class of equivalent local martingale measures is
characterized in terms of the parameters of the price process, and the
connection with convex risk measures is also presented.
\end{abstract}

\noindent\textbf{Key words:} Convex risk measures, duality, robust utility,
L\'evy processes.\\[.2cm]

\noindent\textbf{Mathematical Subject Classification:} 91G10, 60G51.

\section{Introduction}

\addcontentsline{toc}{section}{Introduction}

The progress in portfolio optimization is operationally related to the
possibility of solving a complex optimization problem in a typically
infinite dimensional space, and the ability to translate the emerging
problem to the available optimization methods. Convex duality, stochastic
control are among the most used approaches.

The development in optimal portfolio management is conceptually determined
by the form of the problems emerging from the prevailing theory of choice
under uncertainty. Optimal portfolio selection corresponds in a very
abstract form to choose a maximal element $X$ with respect to a \textit{%
preference order} from a class of admissible elements $\mathcal{X}.$ In a
very short form the story might be traced as follows: The axiom system
proposed by von Neumann and Morgenstern, and Savage lead to a preference
representation as the expectation of a utility function $U$ under a fixed
probability measure $\mathbb{Q}$. The paradigm of expected utility became
one of the pillars in economics during the last century. Starting from an
expected utility problem of the form 
\begin{equation}
\mathbb{E}_{\mathbb{Q}}\left[ U\left( X\right) \right] \rightarrow \max ,
\label{Introduction_1}
\end{equation}%
Harry Markowitz \cite{Markw 1951} derived in the early 50s, for the first
time, a quantitative solution in form of his celebrated mean-variance
analysis \cite{Markw 1952}, and confronted the academic world with the
ubiquitous trade-off between profit and risk in a financial market. It is
common to refer to (\ref{Introduction_1}) as the Merton-problem, because a
solution to this problem in the context of a continuous time Markovian
market model was established in \cite{Mert 1969} and \cite{Mert 1971} using
stochastic control methods. Harrison and Pliska accomplished in \cite%
{Hrs&Plk 1981} and \cite{Hrs&Plk 1983} the connection to stochastic calculus
(initiated by Bachelier at the beginning of the last century), what led to
the continuous time investment-consumption problems, widely studied in the
second half of the last century.

It is merit of Pliska \cite{Plk 1986} to provide the martingale and duality
approach, which is still one of the most influential ideas to solve the
expected utility maximization problem. For the application of stochastic
control methods in the solution of the dual problem for utility maximization
with consumption see \cite{Net&Dan 2005}. Kramkov and Schachermayer \cite%
{Krk&Scha 1999} and \cite{Krk&Scha 2003} studied the problem (\ref%
{Introduction_1}) in a very general semimartingale setting, for utility
functions defined in the positive halfline. A utility function $U:\left(
0,\infty \right) \longrightarrow \mathbb{R}$ will be hereafter a strictly
increasing, strictly concave, continuously differentiable real function,
which satisfies the Inada conditions (i.e. $U^{\prime }\left( 0+\right)
=+\infty $ and $U^{\prime }\left( \infty -\right) =0$). The log-utility $%
U\left( x\right) =\log \left( x\right) $ and the power utility $U\left(
x\right) =\frac{1}{q}x^{q}$, with $q\in \left( -\infty ,1\right) \backslash
\left\{ 0\right\} $, satisfy those properties, and are in the group of
utility functions that more attention have received in the literature. In 
\cite{Krk&Scha 1999}-\cite{Krk&Scha 2003} the authors fixed a prior
probability measure $\mathbb{Q}$ representing the market measure, and tackle
the primal problem in a dynamic setting for a fixed finite time horizon $T$ 
\begin{equation}
u_{\mathbb{Q}}\left( x\right) :=\sup\limits_{X\in \mathcal{X}\left( x\right)
}\left\{ \mathbb{E}_{\mathbb{Q}}\left[ U\left( X_{T}\right) \right] \right\}
,  \label{Classic Setting: Primal Probl.}
\end{equation}%
over a set of admissible wealth processes $\mathcal{X}\left( x\right) $,
which will be explained later. The market model should be arbitrage free in
the sense that the class of \textit{\ equivalent local martingale measures} $%
\mathcal{Q}_{elmm}\left( \mathbb{Q}\right) :=\{\mathbb{Q}^{\prime }\approx 
\mathbb{Q}:\mathcal{X}\left( 1\right) \subset \mathcal{M}_{loc}\left( 
\mathbb{Q}^{\prime }\right) \}$ is not empty, where $\mathcal{M}_{loc}\left( 
\mathbb{Q}^{\prime }\right) $ denotes the set of local martingales with
respect to $\mathbb{Q}^{\prime }.$ In these papers the analysis was based on
the dual formulation, which basic idea is to pass to the convex conjugate $V$
(also known as the Fenchel-Legendre transformation) of the function $%
-U\left( -x\right) $, defined by 
\begin{equation}
V\left( y\right) =\sup_{x>0}\left\{ U\left( x\right) -xy\right\} ,\qquad y>0.
\label{V convex conjugate of U}
\end{equation}%
From the conditions imposed to the utility function $U$, we have that the
conjugate function $V$ is continuously differentiable, decreasing, and
strictly convex, satisfying: $V^{\prime }\left( 0+\right) =-\infty ,$ $%
V^{\prime }\left( \infty \right) =0,$ $V\left( 0+\right) =U\left( \infty
\right) ,$ $V\left( \infty \right) =U\left( 0+\right) $. Further, the
biconjugate of $U$ is again $U$ itself; in other words the bidual
relationship holds 
\begin{equation*}
U\left( x\right) =\inf_{y>0}\left\{ V\left( y\right) +xy\right\} ,\qquad x>0.
\end{equation*}

Kramkov and Schachermayer \cite{Krk&Scha 1999} formulated the dual problem
in the non-robust setting in terms of the value function 
\begin{equation}
v_{\mathbb{Q}}\left( y\right) :=\inf_{Y\in \mathcal{Y}_{\mathbb{Q}}\left(
y\right) }\left\{ \mathbb{E}_{\mathbb{Q}}\left[ V\left( Y_{T}\right) \right]
\right\} ,  \label{Classical Setting: dual value funct.}
\end{equation}%
where 
\begin{equation}
\begin{array}{rl}
\mathcal{Y}_{\mathbb{Q}}\left( y\right) := & \left\{ Y\geq 0:\text{ }%
Y_{0}=y,YX\ \mathbb{Q}\text{-supermartingale\ }\forall X\in \mathcal{X}%
\left( 1\right) \right\} .%
\end{array}
\label{Def._Yq(y)}
\end{equation}%
Observe that any $Y\in \mathcal{Y}_{\mathbb{Q}}\left( y\right) $ is a $%
\mathbb{Q}$-supermartingale,\ since $X\equiv 1\in \mathcal{X}\left( 1\right)
.$ The authors introduced also in \cite{Krk&Scha 1999} the concept of 
\textit{asymptotic elasticity} $AE\left( U\right) :=\underset{x\rightarrow
\infty }{\lim \sup }\frac{xU^{^{\prime }}\left( x\right) }{U\left( x\right) }
$ and proved that when the utility function $U$ has asymptotic elasticity
strictly less than one $AE\left( U\right) <1$, then:

\begin{enumerate}
\item[(i)] There is always a unique solution for $x>0$, i.e. there exists a
unique $\widehat{X}\in \mathcal{X}\left( x\right) $ such that $u_{\mathbb{Q}%
}\left( x\right) =\mathbb{E}_{\mathbb{Q}}\left[ U\left( \widehat{X}%
_{T}\right) \right] $.

\item[(ii)] The value function $u_{\mathbb{Q}}\left( x\right) $ is a utility
function i.e. strictly increasing, strictly concave, continuously
differentiable and satisfies the Inada conditions ($u^{\prime }\left(
0+\right) =+\infty $ and $u^{\prime }\left( \infty -\right) =0$).

\item[(iii)] The dual problem satisfies $v_{\mathbb{Q}}\left( y\right)
<\infty ,\ \forall y>0$, and it can be restricted to the class of equivalent
local martingale measures $\mathcal{Q}_{elmm}\left( \mathbb{Q}\right) $, 
\begin{equation*}
v_{\mathbb{Q}}\left( y\right) =\inf_{\widetilde{\mathbb{Q}}\in \mathcal{Q}%
_{elmm}\left( \mathbb{Q}\right) }\left\{ \mathbb{E}_{\mathbb{Q}}\left[
V\left( yd\widetilde{\mathbb{Q}}/d\mathbb{Q}\right) \right] \right\} .
\end{equation*}
\end{enumerate}

The previous assertions (i) - (iii) hold when the classical problem $\left( %
\ref{Classic Setting: Primal Probl.}\right) $ is finite for at least some $%
x>0,$ and the non-arbitrage condition $\mathcal{Q}_{elmm}\left( \mathbb{Q}%
\right) \neq \varnothing $ together with the Inada conditions for $U$ are
satisfied. Clearly, the asymptotic elasticity hypothesis involves only the
utility function $U$ and hence such condition is independent of the
financial market.

In a more recent contribution, Kramkov and Schachermayer \cite{Krk&Scha 2003}
proved that a necessary and sufficient condition for (i) - (iii) to hold is
that the \textit{dual function is finite}. Moreover, the authors showed that
the following assertions are equivalent: 
\begin{eqnarray}
v_{\mathbb{Q}}\left( y\right) &<&\infty ,\ \text{for all}\;y>0,
\label{Classic Setting:_Necessary_and_sufficient_condition} \\
\lim_{x\rightarrow \infty }\frac{u_{\mathbb{Q}}\left( x\right) }{x} &=&0, 
\notag \\
\inf_{\widetilde{\mathbb{Q}}\in \mathcal{Q}_{elmm}\left( \mathbb{Q}\right) }%
\mathbb{E}_{\mathbb{Q}}\left[ V\left( yd\widetilde{\mathbb{Q}}/d\mathbb{Q}%
\right) \right] &<&\infty ,\ \text{for all}\;y>0.  \notag
\end{eqnarray}%
When any of these conditions is satisfied, it can be concluded that:

\begin{enumerate}
\item[(iv)] $u_{\mathbb{Q}}\left( x\right) <\infty,\; \text{for all}\; x>0.$

\item[(v)] The primal and dual problems have optimal solutions, $\widehat{X}%
\in \mathcal{X}\left( x\right) $ and $\widehat{Y}\in \mathcal{Y}_{\mathbb{Q}%
}\left( y\right) $ respectively, and are unique. Moreover, for $y=u_{\mathbb{%
Q}}^{\prime }\left( x\right) $ it follows that 
\begin{equation*}
U^{\prime }\left( \widehat{X}_{T}\left( x\right) \right) =\widehat{Y}%
_{T}\left( y\right) .
\end{equation*}

\item[(vi)] The primal and dual value functions, $u_{\mathbb{Q}}\left(
x\right) $ and $v_{\mathbb{Q}}\left( y\right) $ respectively, are conjugate%
\begin{eqnarray*}
u_{\mathbb{Q}}\left( x\right) &=&\inf_{y>0}\left\{ v_{\mathbb{Q}}\left(
y\right) +xy\right\} , \\
v_{\mathbb{Q}}\left( y\right) &=&\sup_{x>0}\left\{ u_{\mathbb{Q}}\left(
x\right) -xy\right\} .
\end{eqnarray*}
\end{enumerate}

Let $\mathcal{Q}$ be the family of probability measures on the measurable
space $\left( \Omega ,\mathcal{F}\right) .$ The choice of the market measure 
$\mathbb{Q\in }\mathcal{Q}$ (model uncertainty or ambiguity) has risen many
empirical studies, and has also motivated (beside some incongruous paradox)
a reexamination of the axiomatic foundations of the theory of choice under
uncertainty. Gilboa and Schmeidler \cite{Gil&Sch 1989} gave a significant
step in this direction, introducing the \textquotedblleft
certainty-independence\textquotedblright\ axiom, what led to robust utility
functionals 
\begin{equation*}
X\longrightarrow \inf_{\mathbb{Q}\in \mathcal{Q}^{\prime }}\left\{ \mathbb{E}%
_{\mathbb{Q}}\left[ U\left( X\right) \right] \right\} ,
\end{equation*}%
where the set of \textquotedblleft prior\textquotedblright\ models $\mathcal{%
Q}^{\prime }\subset \mathcal{Q}$ is assumed to be a convex set of
probability measures on the measurable space $\left( \Omega ,\mathcal{F}%
\right) $. For an overview and details about preference orders and its
(robust) representation see F\"{o}llmer and Schied \cite{FoellSch 2002 a}.
The corresponding robust utility maximization problem 
\begin{equation}
\inf_{\mathbb{Q}\in \mathcal{Q}^{\prime }}\left\{ \mathbb{E}_{\mathbb{Q}}%
\left[ U\left( X\right) \right] \right\} \rightarrow \max ,
\label{Gilboa&Schmeidler_Utility_functional}
\end{equation}%
has being studied by several authors. See \cite{Quenez 2004}, \cite{Gundel
2005}, \cite{Schied & Wu 2005}, \cite{Foell&Gund 2006} and \cite{Gundel 2006}
and references therein.

A natural observation is that the worst case approach in (\ref%
{Gilboa&Schmeidler_Utility_functional}) does not discriminate among all
possible models in $\mathcal{Q}^{\prime }$, what again is reflected in
inconsistencies in the axiomatic system proposed in \cite{Gil&Sch 1989}.
Maccheroni, Marinacci and Rustichini \cite{MMR 2006} proposed a relaxed
axiomatic system, which led to utility functionals 
\begin{equation}
X\longrightarrow \inf_{\mathbb{Q}\in \mathcal{Q}^{\prime }}\left\{ \mathbb{E}%
_{\mathbb{Q}}\left[ U\left( X\right) \right] +\vartheta \left( \mathbb{Q}%
\right) \right\} ,  \label{MMR_functionals}
\end{equation}%
where the penalty function $\vartheta $ assigns a weight $\vartheta \left( 
\mathbb{Q}\right) $ to each model $\mathbb{Q}\in \mathcal{Q}^{\prime }$.
Such preference representations take into account both the risk preferences
and model uncertainty. Schied \cite{Schd 2007} developed the corresponding
dual theory for utility functions defined in the positive halfline and
utility functionals of the form $\left( \ref{MMR_functionals}\right) $. The
goal of the economic agent, with an initial capital $x>0,$ will be now to
maximize the penalized expected utility from a terminal wealth in the worst
case model. This means that the agent seeks to solve the associated robust
expected utility problem with value function 
\begin{equation}
u\left( x\right) :=\sup\limits_{X\in \mathcal{X}\left( x\right)
}\inf\limits_{\mathbb{Q}\in \mathcal{Q}_{\ll }^{\vartheta }\left( \mathbb{P}%
\right) }\left\{ \mathbb{E}_{\mathbb{Q}}\left[ U\left( X_{T}\right) \right]
+\vartheta \left( \mathbb{Q}\right) \right\} ,
\label{robust_probl._primal_value_funct}
\end{equation}%
where $\mathcal{Q}_{\ll }^{\vartheta }:=\left\{ \mathbb{Q}\ll \mathbb{P}%
:\vartheta \left( \mathbb{Q}\right) <\infty \right\} $ for a fixed reference
measure $\mathbb{P}.$ To guarantee that the $\mathbb{Q}$-expectation is well
defined, we extend the operator $\mathbb{E}_{\mathbb{Q}}\left[ U\left( \cdot
\right) \right] $ to $\mathcal{L}^{0}$, as in Schied \cite[p. 111]{Schd 2007}%
, by 
\begin{equation}
\mathbb{E}_{\mathbb{Q}}\left[ X\right] :=\sup_{n\in \mathbb{N}}\mathbb{E}_{%
\mathbb{Q}}\left[ X\wedge n\right] =\lim_{n\rightarrow \infty }\mathbb{E}_{%
\mathbb{Q}}\left[ X\wedge n\right] \quad X\in \mathcal{L}^{0}\left( \Omega ,%
\mathcal{F}\right) .  \label{Eq[X]_extention}
\end{equation}%
The corresponding dual value function is defined by 
\begin{equation}
v\left( y\right) :=\inf_{\mathbb{Q}\in \mathcal{Q}_{\ll }^{\vartheta
}}\left\{ v_{\mathbb{Q}}\left( y\right) +\vartheta \left( \mathbb{Q}\right)
\right\} .  \label{robust_probl_dual_value_funct}
\end{equation}%
In this robust setting the necessary and sufficient condition $\left( \ref%
{Classic Setting:_Necessary_and_sufficient_condition}\right) $ is
transformed into 
\begin{equation}
v_{\mathbb{Q}}\left( y\right) <\infty \quad \text{for all}\;\mathbb{Q\in }%
\mathcal{Q}_{\approx }^{\vartheta }\quad \text{and }\;y>0,
\label{vQ(y)<oo_for_y>0_Q_nice}
\end{equation}%
where $\mathcal{Q}_{\approx }^{\vartheta }:=\left\{ \mathbb{Q}\approx 
\mathbb{P}:\vartheta \left( \mathbb{Q}\right) <\infty \right\} .$

\begin{remark}
\label{V<0=>Solution exist} When the conjugate convex function $V$ is
bounded from above it follows immediately that the penalized robust utility
maximization problem (\ref{robust_probl._primal_value_funct}) has a solution
for any proper penalty function $\vartheta $. This is the case, for
instance, of the power utility function $U\left( x\right) :=\frac{1}{q}x^{q}$%
, for $q\in \left( -\infty ,0\right) $, where the convex conjugate function $%
V\left( x\right) =\frac{1}{p}x^{-p}\leq 0$, with $p:=\frac{q}{1-q}$.
Moreover, condition (\ref{vQ(y)<oo_for_y>0_Q_nice}) points out also that the
existence of solution to the problem (\ref{robust_probl._primal_value_funct}%
) relies on the positive part $V^{+}$ of the convex conjugate function.
\end{remark}

Let $\vartheta $ be a penalty function bounded from below, which corresponds
to the minimal penalty function of a normalized and sensitive convex risk
measure, see Section \ref{Sub_Sect:_Static_Risk_Measures} for details and
further references. Assuming condition $\left( \ref{vQ(y)<oo_for_y>0_Q_nice}%
\right) $, the following assertions hold for the robust problem $\left( \ref%
{robust_probl._primal_value_funct}\right) $.

\begin{enumerate}
\item[(vii)] The robust value function $u\left( x\right) $ is strictly
concave and takes only finite values.

\item[(viii)] The \textquotedblleft minimax property\textquotedblright\ is
satisfied 
\begin{equation*}
\sup\limits_{X\in \mathcal{X}\left( x\right) }\inf\limits_{\mathbb{Q}\in 
\mathcal{Q}_{\ll }^{\vartheta }}\left\{ \mathbb{E}_{\mathbb{Q}}\left[
U\left( X_{T}\right) \right] +\vartheta \left( \mathbb{Q}\right) \right\}
=\inf\limits_{\mathbb{Q}\in \mathcal{Q}_{\ll }^{\vartheta
}}\sup\limits_{X\in \mathcal{X}\left( x\right) }\left\{ \mathbb{E}_{\mathbb{Q%
}}\left[ U\left( X_{T}\right) \right] +\vartheta \left( \mathbb{Q}\right)
\right\} ;
\end{equation*}%
in other words, 
\begin{equation*}
u\left( x\right) =\inf\limits_{\mathbb{Q}\in \mathcal{Q}_{\ll }^{\vartheta
}}\left\{ u_{\mathbb{Q}}\left( x\right) +\vartheta \left( \mathbb{Q}\right)
\right\} .
\end{equation*}

\item[(ix)] $u$ and $v$ are conjugate 
\begin{equation*}
u\left( x\right) =\inf_{y>0}\left( v\left( y\right) +xy\right) \text{\qquad
and\qquad }v\left( y\right) =\sup_{x>0}\left( u\left( x\right) -xy\right) .
\end{equation*}

\item[(x)] $v$ is convex, continuously differentiable, and take only finite
values.

\item[(xi)] The dual problem $\left( \ref{robust_probl_dual_value_funct}%
\right) $ has an optimal solution. That is, there exist $\mathbb{Q}^{\ast
}\in \mathcal{Q}_{\ll }^{\vartheta }$ and $Y^{\ast }\in \mathcal{Y}_{\mathbb{%
Q}^{\ast }}\left( y\right) $ such that 
\begin{equation*}
\mathbb{E}_{\mathbb{Q}^{\ast }}\left[ V\left( Y_{T}^{\ast }\right) \right]
+\vartheta \left( \mathbb{Q}^{\ast }\right) =\inf\limits_{\mathbb{Q}\in 
\mathcal{Q}_{\ll }^{\vartheta }}\left\{ \inf\limits_{Y\in \mathcal{Y}_{%
\mathbb{Q}}\left( y\right) }\left\{ \mathbb{E}_{\mathbb{Q}}\left[ V\left(
Y_{T}\right) \right] \right\} +\vartheta \left( \mathbb{Q}\right) \right\} ,
\end{equation*}%
which is maximal in the sense that any other solution $\left( \mathbb{Q}%
,Y\right) $ satisfies $\mathbb{Q}\ll \mathbb{Q}^{\ast }$ and $%
Y_{T}=Y_{T}^{\ast }\ \mathbb{Q}$-a.s. .

\item[(xii)] For each $x>0$ there exists an optimal solution $X^{\ast }\in 
\mathcal{X}\left( x\right) $ to the robust problem $\left( \ref%
{robust_probl._primal_value_funct}\right) .$ Furthermore, let $y>0$, such
that $v^{\prime }\left( y\right) =-x$, and $\left( \mathbb{Q}^{\ast
},Y^{\ast }\right) $ be a solution to the dual problem $\left( \ref%
{robust_probl_dual_value_funct}\right) $. Then $\left( \mathbb{Q}^{\ast
},X^{\ast }\right) $, with%
\begin{equation*}
X_{T}^{\ast }:=-V^{^{\prime }}\left( Y_{T}^{\ast }\right) ,
\end{equation*}%
is a saddlepoint for the robust problem 
\begin{equation*}
u\left( x\right) =\mathbb{E}_{\mathbb{Q}^{\ast }}\left[ U\left( X_{T}^{\ast
}\right) \right] +\vartheta \left( \mathbb{Q}^{\ast }\right) =\inf\limits_{%
\mathbb{Q}\in \mathcal{Q}_{\ll }^{\vartheta }}\sup\limits_{X\in \mathcal{X}%
\left( x\right) }\left\{ \mathbb{E}_{\mathbb{Q}}\left[ U\left( X_{T}\right) %
\right] +\vartheta \left( \mathbb{Q}\right) \right\} .
\end{equation*}
\end{enumerate}

The outline and description of the main contributions of the paper are as
follows: In Section \ref{Sect. Preliminaries} we propose the probability
space on which we shall develop our work, and describe the class of
absolutely continuous probabilities with respect to a reference probability
measure $\mathbb{P}.$ We also recall some fundamental facts about static
convex measures of risk needed to establish the main results.

Samuelson \cite{Saml 1965} seems to be the first to propose a geometric
Brownian motion as a model for the prices of the underlying assets in a
market; it is often referred (wrongly) as the Black \& Scholes model. This
idea led to the, almost ubiquitous, exponential semimartingales models. We
use one of them to introduce the market model in Section \ref{Sect.
Market_Model}, which need not to have independent increments but include
certain L\'{e}vy exponential models, and has been used to study some
problems close to ours; see for instance \cite{Mata & Oks 2008} and \cite%
{Oks&Sul}. We also give in this section a characterization of the equivalent
local martingale measures for the proposed model. This contribution extends
to our setting a result of Kunita \cite{Kunita 2004} for L\'{e}vy
exponential models. We finish this section introducing a family of
penalties, which are minimal for the convex measures of risk generated by
duality.

Once we have introduced necessary conditions for the penalization and the
corresponding convex measure of risk $\rho ,$ which are relevant to develop
the duality theory for the maximization of a penalized robust expected
utility problem as in Schied \cite{Schd 2007}, we address in Section \ref%
{Sect Robust Utility Maximization} the relationship between the choice of a
penalty function and the existence of a solution to the dual problem. For
the power and the logarithmic utility functions we provide, in each case,
thresholds for the family of penalty functions, which guarantee the
existence of solutions to the optimal allocation problem. These results are
the main contributions of this work and their proof are based on Theorem \ref%
{theta=minimal penalty function} in Section \ref{MinimalPenalties}. For
stochastic volatility models, the robust utility maximization problem was
addressed in \cite{Hdz&Sch_2006} and \cite{Hdz&Sch} using stochastic control
technics. We finish this section with a representation of the dual problem,
given in Theorem \ref{Model_1_robust dual value function_(Thm)}, in terms of
certain coefficients for an arbitrary utility function.

\section{Preliminaries\label{Sect. Preliminaries}}

\setcounter{equation}{0} Within a probability space which supports a
semimartingale with the weak predictable representation property, there is a
representation of the density processes of the absolutely continuous
probability measures by means of two coefficients. Roughly speaking, the
weak predictable representation property means that the \textquotedblleft
dimension\textquotedblright\ of the linear space of local martingales is
two. Throughout these coefficients we can represent every local martingale
as a combination of two components, namely an stochastic integral with
respect to the continuous part of the semimartingale and an integral with
respect to its compensated jump measure. This is of course the case for
local martingales, and with more reason this observation about the
dimensionality holds for the martingales associated with the corresponding
densities processes. In this section we also review some concepts of
stochastic calculus needed to understand these representation properties.

\subsection{Fundamentals of L\'{e}vy and semimartingales processes \label%
{Sub_Sect:_Fundamentals_Levy_and_Semimartingales}}

Let $\left( \Omega ,\mathcal{F},\mathbb{P}\right) $ be a probability space.
We say that $L:=\left\{ L_{t}\right\} _{t\in \mathbb{R}_{+}}$ is a L\'{e}vy
process for this probability space if it is an adapted c\'{a}dl\'{a}g
process with independent stationary increments starting at zero. The
filtration considered is $\mathbb{F}:=\left\{ \mathcal{F}_{t}^{\mathbb{P}%
}\left( L\right) \right\} _{t\in \mathbb{R}_{+}}$, the completion of its
natural filtration, i.e. $\mathcal{F}_{t}^{\mathbb{P}}\left( L\right)
:=\sigma \left\{ L_{s}:s\leq t\right\} \vee \mathcal{N}$ where $\mathcal{N}$
is the $\sigma $-algebra generated by all $\mathbb{P}$-null sets. The jump
measure of $L$ is denoted by $\mu :\Omega \times \left( \mathcal{B}\left( 
\mathbb{R}_{+}\right) \otimes \mathcal{B}\left( \mathbb{R}_{0}\right)
\right) \rightarrow \mathbb{N}$ where $\mathbb{R}_{0}:=\mathbb{R}\setminus
\left\{ 0\right\} $. The dual predictable projection of this measure, also
known as its L\'{e}vy system, satisfies the relation $\mu ^{\mathcal{P}%
}\left( dt,dx\right) =dt\times \nu \left( dx\right) $, where $\nu \left(
\cdot \right) :=\mathbb{E}\left[ \mu \left( \left[ 0,1\right] \times \cdot
\right) \right] $ is the, so called, intensity or L\'{e}vy measure of $L.$

The L\'{e}vy-It\^{o} decomposition of $L$ is given by 
\begin{equation}
L_{t}=bt+W_{t}+\int\limits_{\left[ 0,t\right] \times \left\{ 0<\left\vert
x\right\vert \leq 1\right\} }x\left\{ \mu \left( ds,dx\right) -\nu \left(
dx\right) ds\right\} +\int\limits_{\left[ 0,t\right] \times \left\{
\left\vert x\right\vert >1\right\} }x\mu \left( ds,dx\right) .
\label{Levy-Ito_decomposition}
\end{equation}%
It implies that $L^{c}=W$ is the Wiener process, and hence $\left[ L^{c}%
\right] _{t}=t$, where $\left( \cdot \right) ^{c}$ and $\left[ \,\cdot \,%
\right] $ denote the continuous martingale part and the process of quadratic
variation of any semimartingale, respectively. For the predictable quadratic
variation we use the notation $\left\langle \,\cdot \,\right\rangle $.

Even though most of the paper deals with L\'{e}vy processes, we need to
introduce some notation from the theory of semimartingales, and present some
results needed in the next sections. Denote by $\mathcal{V}$ the set of c%
\'{a}dl\'{a}g, adapted processes with finite variation, and let $\mathcal{V}%
^{+}\subset \mathcal{V}$ be the subset of non-decreasing processes in $%
\mathcal{V}$ starting at zero.

Let $\mathcal{A}\subset \mathcal{V}$ be the class of processes with
integrable variation, i.e. $A\in \mathcal{A}$ if and only if $%
\bigvee\nolimits_{0}^{\infty }A\in L^{1}\left( \mathbb{P}\right) $, where $%
\bigvee\nolimits_{0}^{t}A$ denotes the variation of $A$ over the finite
interval $\left[ 0,t\right] $. The subset $\mathcal{A}^{+}\subset \mathcal{A}
$ represents those processes which are also increasing i.e. with
non-negative right-continuous increasing trajectories. Furthermore, $%
\mathcal{A}_{loc}$ (resp. $\mathcal{A}_{loc}^{+}$) is the collection of
adapted processes with locally integrable variation (resp. adapted locally
integrable increasing processes). For a c\'{a}dl\'{a}g process $X$ we denote
by $X_{-}:=\left( X_{t-}\right) $ the left hand limit process, with $%
X_{0-}:=X_{0}$ by convention, and by $\bigtriangleup X=\left( \bigtriangleup
X_{t}\right) $ the jump process $\bigtriangleup X_{t}:=X_{t}-X_{t-}$.

Given an adapted c\'{a}dl\'{a}g semimartingale $U$, the jump measure and its
dual predictable projection (or compensator) are denoted by $\mu _{U}\left( %
\left[ 0,t\right] \times A\right) :=\sum_{s\leq t}\mathbf{1}_{A}\left(
\triangle U_{s}\right) $ and $\mu _{U}^{\mathcal{P}}$, respectively.
Further, we denote by $\mathcal{P}\subset \mathcal{F}\otimes \mathcal{B}%
\left( \mathbb{R}_{+}\right) $ the predictable $\sigma $-algebra and by $%
\widetilde{\mathcal{P}}:=\mathcal{P}\otimes \mathcal{B}\left( \mathbb{R}%
_{0}\right) .$ With some abuse of notation, we write $\theta _{1}\in 
\widetilde{\mathcal{P}}$ when the function $\theta _{1}:$ $\Omega \times 
\mathbb{R}_{+}\times \mathbb{R}_{0}\rightarrow \mathbb{R}$ is $\widetilde{%
\mathcal{P}}$-measurable and $\theta \in \mathcal{P}$ for predictable
processes.

Let 
\begin{equation}
\begin{array}{clc}
\mathcal{L}\left( U^{c}\right) := & \left\{ \theta \in \mathcal{P}:\exists
\left\{ \tau _{n}\right\} _{n\in \mathbb{N}}\text{ sequence of stopping
times with }\tau _{n}\uparrow \infty \right. &  \\ 
& \left. \text{and }\mathbb{E}\left[ \int\nolimits_{0}^{\tau _{n}}\theta
^{2}d\left[ U^{c}\right] \right] <\infty \ \forall n\in \mathbb{N}\right\} & 
\end{array}
\label{Def._L(U)}
\end{equation}%
be the class of predictable processes $\theta \in \mathcal{P}$ integrable
with respect to $U^{c}$ in the sense of local martingale, and by 
\begin{equation*}
\Lambda \left( U^{c}\right) :=\left\{ \int \theta _{0}dU^{c}:\theta _{0}\in 
\mathcal{L}\left( U^{c}\right) \right\}
\end{equation*}%
the linear space of processes which admit a representation as the stochastic
integral with respect to $U^{c}$. For an integer valued random measure $%
\widetilde{\mu }$ we denote by $\mathcal{G}\left( \widetilde{\mu }\right) $
the class of $\widetilde{\mathcal{P}}$-measurable processes $\theta _{1}:$ $%
\Omega \times \mathbb{R}_{+}\times \mathbb{R}_{0}\rightarrow \mathbb{R}$
satisfying the following conditions: 
\begin{equation*}
\begin{array}{cl}
\left( i\right) & \theta _{1}\in \widetilde{\mathcal{P}}, \\ 
\left( ii\right) & \int\nolimits_{\mathbb{R}_{0}}\left\vert \theta
_{1}\left( t,x\right) \right\vert \widetilde{\mu }^{\mathcal{P}}\left(
\left\{ t\right\} ,dx\right) <\infty \ \forall t>0, \\ 
\left( iii\right) & \text{The process } \\ 
& \left\{ \sqrt{\sum\limits_{s\leq t}\left\{ \int\nolimits_{\mathbb{R}%
_{0}}\theta _{1}\left( s,x\right) \widetilde{\mu }\left( \left\{ s\right\}
,dx\right) -\int\nolimits_{\mathbb{R}_{0}}\theta _{1}\left( s,x\right) 
\widetilde{\mu }^{\mathcal{P}}\left( \left\{ s\right\} ,dx\right) \right\}
^{2}}\right\} _{t\in \mathbb{R}_{+}}\in \mathcal{A}_{loc}^{+}.%
\end{array}%
\end{equation*}%
The set $\mathcal{G}\left( \widetilde{\mu }\right) $ represents the domain
of the functional $\theta _{1}\rightarrow \int \theta _{1}d\left( \widetilde{%
\mu }-\widetilde{\mu }^{\mathcal{P}}\right) .$ We use the notation $\int
\theta _{1}d\left( \widetilde{\mu }-\widetilde{\mu }^{\mathcal{P}}\right) $
to write the value of this functional in $\theta _{1}$. It is important to
point out that this integral functional is not, in general, the integral
with respect to the difference of two measures. But for $\theta _{1}\in 
\widetilde{\mathcal{P}}$ with $\tint\nolimits_{\left[ 0,t\right] \times 
\mathbb{R}_{0}}\theta _{1}d\mu \in \mathcal{A}_{loc}$ we have $\theta
_{1}\in \mathcal{G}\left( \widetilde{\mu }\right) $ and 
\begin{equation*}
\tint \theta _{1}\left( t,x\right) d\left\{ \widetilde{\mu }-\widetilde{\mu }%
^{\mathcal{P}}\right\} =\tint \theta _{1}\left( t,x\right) \widetilde{\mu }%
\left( dt,dx\right) -\tint \theta _{1}\left( t,x\right) \widetilde{\mu }^{%
\mathcal{P}}\left( dt,dx\right) .
\end{equation*}%
For a detailed exposition on these topics see He, Wang and Yan \cite%
{HeWanYan} or Jacod and Shiryaev \cite{Jcd&Shry 2003}, which are our basic
references.

In particular, for the L\'{e}vy process $L$ with jump measure $\mu $, 
\begin{equation}
\mathcal{G}\left( \mu \right) \equiv \left\{ \theta _{1}\in \widetilde{%
\mathcal{P}}:\left\{ \sqrt{\sum\nolimits_{s\leq t}\left\{ \theta _{1}\left(
s,\triangle L_{s}\right) \right\} ^{2}\mathbf{1}_{\mathbb{R}_{0}}\left(
\triangle L_{s}\right) }\right\} _{t\in \mathbb{R}_{+}}\in \mathcal{A}%
_{loc}^{+}\right\} ,  \label{G(miu) Definition}
\end{equation}%
since $\mu ^{\mathcal{P}}\left( \left\{ t\right\} \times A\right) =0$, for
any Borel set $A$ of $\mathbb{R}_{0}$. Recall also that for an adapted
process of finite variation $A\in \mathcal{V}$ we have 
\begin{equation}
A\in \mathcal{A}_{loc}\Longleftrightarrow \sqrt{\tsum\nolimits_{s\leq \cdot
}\left( \bigtriangleup A_{s}\right) ^{2}}\in \mathcal{A}_{loc}^{+}.
\label{Aloc_iff_(3)}
\end{equation}%
Therefore for $\theta _{1}\in \mathcal{G}\left( \mu \right) $ with $%
\tint\nolimits_{\left[ 0,t\right] \times \mathbb{R}_{0}}\left\vert \theta
_{1}\right\vert d\mu <\infty $ $\forall t\ \mathbb{P}$-a.s. it follows that $%
\tint\nolimits_{\left[ 0,t\right] \times \mathbb{R}_{0}}\theta _{1}d\mu \in 
\mathcal{A}_{loc}.$ Furthemore using a localizing argument we have for $%
\theta _{1},\theta _{1}^{\prime }\in \mathcal{G}\left( \mu \right) $ with $%
\left\{ \theta _{1}^{\prime }\left( t,\triangle L_{t}\right) \right\} _{t}$
a locally bounded process that $\tint\nolimits_{\left[ 0,t\right] \times 
\mathbb{R}_{0}}\left\vert \theta _{1}\theta _{1}^{\prime }\right\vert d\mu
\in \mathcal{A}_{loc}^{+}.$

We say that the semimartingale $U$ has the \textit{weak property of
predictable representation} when 
\begin{equation}
\mathcal{M}_{loc,0}=\Lambda \left( U^{c}\right) +\left\{ \int \theta
_{1}d\left( \mu _{U}-\mu _{U}^{\mathcal{P}}\right) :\theta _{1}\in \mathcal{G%
}\left( \mu _{U}\right) \right\} ,\   \label{Def_weak_predictable_repres.}
\end{equation}%
where the previous sum is the linear sum of the vector spaces, and $\mathcal{%
M}_{loc,0}$ is the linear space of local martingales starting at zero.

The integral representation of a semimartingale $U$ asserts that 
\begin{equation}
U_{t}=U_{0}+\alpha _{t}^{U}+U_{t}^{c}+\int\limits_{\left[ 0,t\right] \times
\left\{ 0<\left\vert x\right\vert \leq 1\right\} }x\left\{ \mu _{U}\left(
ds,dx\right) -\mu _{U}^{\mathcal{P}}\left( dx,ds\right) \right\}
+\int\limits_{\left[ 0,t\right] \times \left\{ \left\vert x\right\vert
>1\right\} }x\mu _{U}\left( ds,dx\right) ,
\label{Integral_represnt._of_Semimartingale}
\end{equation}%
where $\alpha _{t}^{U}$ is a predictable process with finite variation and $%
\alpha _{0}^{U}=0.$ Taking $\beta _{t}^{U}:=\left[ U^{c}\right] _{t}$ we
define $\left( \alpha ^{U},\beta ^{U},\mu _{U}^{\mathcal{P}}\right) $ as the 
\textit{predictable characteristics} (\textit{predictable triplet}, \textit{%
local characteristics}) of the semimartingale $U.$

\subsection{Density processes \label{Sub_Sect:_Density_processes}}

Given an absolutely continuous probability measure $\mathbb{Q}\ll \mathbb{P}$
in a filtered probability space, where a semimartingale with the weak
predictable representation property is defined, the structure of the density
process has been studied extensively by several authors; see Theorem 14.41
in He, Wang and Yan \cite{HeWanYan} or Theorem III.5.19 in Jacod and
Shiryaev \cite{Jcd&Shry 2003}.

It is well known that the L\'{e}vy-processes satisfy the weak property of
predictable representation when the completed natural filtration is
considered. In the following lemma we present the characterization of the
density processes for the case of these processes. For L\'{e}vy processes
the proof can be found in \cite{HH-PH}.

\begin{lemma}
\label{Q<<P =>} Given an absolutely continuous probability measure $\mathbb{Q%
}\ll \mathbb{P}$, there exist coefficients $\theta _{0}\in \mathcal{L}\left(
W\right) \ $and $\theta _{1}\in \mathcal{G}\left( \mu \right) $ such that 
\begin{equation*}
\frac{d\mathbb{Q}_{t}}{d\mathbb{P}_{t}}=\mathcal{E}\left( Z^{\theta }\right)
\left( t\right) ,
\end{equation*}%
where 
\begin{equation}
Z_{t}^{\theta }:=\int\nolimits_{]0,t]}\theta
_{0}dW+\int\nolimits_{]0,t]\times \mathbb{R}_{0}}\theta _{1}\left(
s,x\right) \left( \mu \left( ds,dx\right) -ds\ \nu \left( dx\right) \right) ,
\label{Def._Ztheta(t)}
\end{equation}%
and $\mathcal{E}$ represents the Doleans-Dade exponential of a
semimartingale. The coefficients $\theta _{0}$ and $\theta _{1}$ are unique, 
$\mathbb{P}$-a.s. and $\mu _{\mathbb{P}}^{\mathcal{P}}\left( ds,dx\right) $%
-a.s., respectively.
\end{lemma}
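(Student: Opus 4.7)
The plan is to exploit the weak predictable representation property (WPRP) of the Lévy process $L$---which is classical for the completed natural filtration---and then to read off the exponential structure. First I would set $D_{t}:=d\mathbb{Q}_{t}/d\mathbb{P}_{t}$, which is a non-negative càdlàg $\mathbb{P}$-martingale with $D_{0}=1$; in particular $M:=D-1\in \mathcal{M}_{loc,0}$. Applying identity (\ref{Def_weak_predictable_repres.}) to $L$ (recall $L^{c}=W$ and $\mu_{L}=\mu$) yields coefficients $\psi_{0}\in \mathcal{L}(W)$ and $\psi_{1}\in \mathcal{G}(\mu)$, unique in the respective senses, such that
$$
D_{t}=1+\int_{0}^{t}\psi_{0}\,dW_{s}+\int_{]0,t]\times \mathbb{R}_{0}}\psi_{1}(s,x)\,\bigl(\mu(ds,dx)-\nu(dx)\,ds\bigr).
$$

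The next step is to recognize $D$ as a Doléans--Dade exponential. The natural candidates are $\theta_{0}:=\psi_{0}/D_{-}$ and $\theta_{1}(s,x):=\psi_{1}(s,x)/D_{s-}$, so that the above integral equation reads $dD_{t}=D_{t-}\,dZ_{t}^{\theta}$, i.e.\ $D=\mathcal{E}(Z^{\theta})$. Because $\mathbb{Q}\ll \mathbb{P}$ is only absolute continuity, $D$ may hit zero, so I would introduce the absorbing time $\tau:=\inf\{t\ge 0:D_{t}=0\text{ or }D_{t-}=0\}$; as $D$ is a non-negative martingale, it stays at zero from $\tau$ onwards. On the stochastic interval $[0,\tau)$ the process $1/D_{-}$ is locally bounded, and via the localizing sequence $\tau_{n}:=\inf\{t:D_{t-}\le 1/n\}\wedge n\wedge \tau$ one verifies that $\theta_{0}\in \mathcal{L}(W)$ and $\theta_{1}\in \mathcal{G}(\mu)$; on $[\tau,\infty)$ I set both coefficients to zero, which preserves membership in these classes.

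The verification then reduces to the fact that the linear SDE $X_{t}=1+\int_{0}^{t}X_{s-}\,dZ_{s}^{\theta}$ admits a unique solution, namely $\mathcal{E}(Z^{\theta})$; since by construction $D$ satisfies this equation on $[0,\tau)$, the two processes coincide there. The identification on $[\tau,\infty)$ follows because both processes are absorbed at zero: $D$ by the martingale property together with non-negativity, and $\mathcal{E}(Z^{\theta})$ because at $\tau$ either a jump $\Delta Z_{\tau}^{\theta}=-1$ forces the exponential to vanish or the continuous factor already drives it to zero, after which the vanishing of $\theta$ freezes $\mathcal{E}(Z^{\theta})$ at zero. Uniqueness of $\theta_{0}$ and $\theta_{1}$ in the stated senses is inherited from the uniqueness of $\psi_{0}$ and $\psi_{1}$ in the WPRP representation, restricted to $\{D_{-}>0\}$; on $\{D_{-}=0\}$ neither coefficient contributes to $Z^{\theta}$ and the choice is immaterial.

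The main technical obstacle is the clean treatment of the absorbing set $\{D_{-}=0\}$: making the quotients $\theta_{0},\theta_{1}$ meaningful while keeping them in the integrand spaces $\mathcal{L}(W)$ and $\mathcal{G}(\mu)$, and matching the two processes pointwise at the hitting time $\tau$. Once this is handled by the localization above, the result becomes a direct consequence of the WPRP of $L$ combined with the uniqueness of solutions to the Doléans--Dade SDE.
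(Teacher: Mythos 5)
Your argument is correct and is essentially the proof the paper relies on: the paper itself does not prove Lemma \ref{Q<<P =>} in-line but defers to \cite{HH-PH} and the classical results (Theorem 14.41 of He--Wang--Yan, Theorem III.5.19 of Jacod--Shiryaev), whose underlying argument is exactly yours --- apply the weak predictable representation property to the martingale $D-1$, divide the integrands by $D_{-}$, and identify $D$ with $\mathcal{E}(Z^{\theta})$ via uniqueness for the Dol\'eans--Dade equation, with the standard localization at the zero set of $D_{-}$. Your handling of the absorbing time $\tau$ and of the uniqueness of $(\theta_{0},\theta_{1})$ on $\{D_{-}>0\}$ matches that treatment, so no further comparison is needed.
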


For $\mathbb{Q}\ll \mathbb{P}$ the function $\theta _{1}\left( \omega
,t,x\right) $ described in Lemma \ref{Q<<P =>} determines the density of the
predictable projection $\mu _{\mathbb{Q}}^{\mathcal{P}}\left( dt,dx\right) $
with respect to $\mu _{\mathbb{P}}^{\mathcal{P}}\left( dt,dx\right) $ (see
He,Wang and Yan \cite{HeWanYan} or Jacod and Shiryaev \cite{Jcd&Shry 2003}).
More precisely, for\ $B\in \left( \mathcal{B}\left( \mathbb{R}_{+}\right)
\otimes \mathcal{B}\left( \mathbb{R}_{0}\right) \right) $ we have 
\begin{equation}
\mu _{\mathbb{Q}}^{\mathcal{P}}\left( \omega ,B\right) =\int_{B}\left(
1+\theta _{1}\left( \omega ,t,x\right) \right) \mu _{\mathbb{P}}^{\mathcal{P}%
}\left( dt,dx\right) .  \label{Q<<P=>_miu_wrt_Q}
\end{equation}

In what follows we restrict ourself to the time interval $\left[ 0,T\right]
, $ for some $T>0$ fixed, and take $\mathcal{F}=\mathcal{F}_{T}.$ We denote
by $\mathcal{Q}_{\ll }(\mathbb{P})$ the subclass of absolutely continuous
probability measure with respect to $\mathbb{P}$ and by $\mathcal{Q}%
_{\approx }\left( \mathbb{P}\right) $ the subclass of equivalent probability
measures. The corresponding classes of density processes associated to $%
\mathcal{Q}_{\ll }(\mathbb{P})$ and $\mathcal{Q}_{\approx }\left( \mathbb{P}%
\right) $ are denoted by $\mathcal{D}_{\ll }\left( \mathbb{P}\right) $ and $%
\mathcal{D}_{\approx }\left( \mathbb{P}\right) $, respectively. For
instance, in the former case 
\begin{equation}
\mathcal{D}_{\ll }\left( \mathbb{P}\right) :=\left\{ D=\left\{ D_{t}\right\}
_{t\in \left[ 0,T\right] }:\exists \mathbb{Q}\in \mathcal{Q}_{\ll }\left( 
\mathbb{P}\right) \text{ with }D_{t}=\left. \frac{d\mathbb{Q}}{d\mathbb{P}}%
\right\vert _{\mathcal{F}_{t}}\right\} ,  \label{Def._D<<}
\end{equation}%
and the processes in this set are of the form 
\begin{equation}
\begin{array}{rl}
D_{t}= & \exp \left\{ \int\limits_{]0,t]}\theta
_{0}dW+\int\limits_{]0,t]\times \mathbb{R}_{0}}\theta _{1}\left( s,x\right)
\left( \mu \left( ds,dx\right) -\nu \left( dx\right) ds\right) -\frac{1}{2}%
\int\nolimits_{]0,t]}\left( \theta _{0}\right) ^{2}ds\right\} \times \\ 
& \times \exp \left\{ \int\limits_{]0,t]\times \mathbb{R}_{0}}\left\{ \ln
\left( 1+\theta _{1}\left( s,x\right) \right) -\theta _{1}\left( s,x\right)
\right\} \mu \left( ds,dx\right) \right\} ,%
\end{array}
\label{D(t) explicita}
\end{equation}%
for $\theta _{0}\in \mathcal{L}\left( W\right) $ and $\theta _{1}\in 
\mathcal{G}\left( \mu \right) $.

If $\int \theta _{1}\left( s,x\right) \mu \left( ds,dx\right) \in \mathcal{A}%
_{loc}\left( \mathbb{P}\right) $ the previous formula can be written as 
\begin{eqnarray}
D_{t} &=&\exp \left\{ \int\limits_{]0,t]}\theta _{0}dW-\frac{1}{2}%
\int\limits_{]0,t]}\left( \theta _{0}\left( s\right) \right)
^{2}ds+\int\limits_{]0,t]\times \mathbb{R}_{0}}\ln \left( 1+\theta
_{1}\left( s,x\right) \right) \mu \left( ds,dx\right) \right.
\label{Density_for_Aloc} \\
&&\left. -\int\limits_{]0,t]\times \mathbb{R}_{0}}\theta _{1}\left(
s,x\right) \nu \left( dx\right) ds\right\} .  \notag
\end{eqnarray}

\subsection{Static measures of risk\label{Sub_Sect:_Static_Risk_Measures}}

Let $X:\Omega \rightarrow \mathbb{R}$ be a mapping from a set $\Omega $ of
possible market scenarios, representing the discounted net worth of the
position. Uncertainty is represented by the measurable space $(\Omega ,%
\mathcal{F})$, and we denote by $\mathcal{X}$ the linear space of bounded
financial positions, including constant functions.

\begin{definition}
The function $\rho :\mathcal{X}\rightarrow \mathbb{R}$, quantifying the risk
of $X$, is a \textit{monetary risk measure} if it satisfies the following
properties: 
\begin{equation}
\begin{array}{rl}
\text{Monotonicity:} & \text{If }X\leq Y\text{ then }\rho \left( X\right)
\geq \rho \left( Y\right) \ \forall X,Y\in \mathcal{X}.%
\end{array}
\label{Monotonicity}
\end{equation}%
$\smallskip \ $%
\begin{equation}
\begin{array}{rl}
\text{Translation Invariance:} & \rho \left( X+a\right) =\rho \left(
X\right) -a\ \forall a\in \mathbb{R}\ \forall X\in \mathcal{X}.%
\end{array}
\label{Translation Invariance}
\end{equation}%
When this function satisfies also the convexity property 
\begin{equation}
\begin{array}{rl}
& \rho \left( \lambda X+\left( 1-\lambda \right) Y\right) \leq \lambda \rho
\left( X\right) +\left( 1-\lambda \right) \rho \left( Y\right) \ \forall
\lambda \in \left[ 0,1\right] \ \forall X,Y\in \mathcal{X},%
\end{array}
\label{Convexity}
\end{equation}%
it is said that $\rho $ is a convex risk measure.
\end{definition}

We say that a set function $\mathbb{Q}:\mathcal{F}\rightarrow \left[ 0,1%
\right] $ is a \textit{probability content} if it is finite additive and $%
\mathbb{Q}\left( \Omega \right) =1$. The set of \textit{probability contents}
on this measurable space is denoted by $\mathcal{Q}_{cont}$. From the
general theory of static convex risk measures, we know that any map $\psi :%
\mathcal{Q}_{cont}\rightarrow \mathbb{R}\cup \{+\infty \},$ with $%
\inf\nolimits_{\mathbb{Q}\in \mathcal{Q}_{cont}}\psi (\mathbb{Q})\in \mathbb{%
R}$, induces a static convex measure of risk as a mapping $\rho :\mathfrak{M}%
_{b}\rightarrow \mathbb{R}$ given by 
\begin{equation}
\rho (X):=\sup\nolimits_{\mathbb{Q}\in \mathcal{Q}_{cont}}\left\{ \mathbb{E}%
_{\mathbb{Q}}\left[ -X\right] -\psi (\mathbb{Q})\right\} .
\label{Static_CMR_induced_by_phi}
\end{equation}%
Here $\mathfrak{M}$ denotes the class of measurable functions and $\mathfrak{%
M}_{b}$ the subclass of bounded measurable functions. F\"{o}llmer and Schied 
\cite[Theorem 3.2]{FoellSch 2002 b} and Frittelli and Rosazza Gianin \cite[%
Corollary 7]{FritRsza 2002} proved that any convex risk measure is
essentially of this form.

More precisely, a convex measure of risk $\rho $ on the space of bounded
functions $\mathfrak{M}_{b}\left( \Omega ,\mathcal{F}\right) $ has the
representation 
\begin{equation}
\rho (X)=\sup\limits_{\mathbb{Q}\in \mathcal{Q}_{cont}}\left\{ \mathbb{E}_{%
\mathbb{Q}}\left[ -X\right] -\psi _{\rho }^{\ast }\left( \mathbb{Q}\right)
\right\} ,  \label{Static_CMR_Robust_representation}
\end{equation}%
where 
\begin{equation}
\psi _{\rho }^{\ast }\left( \mathbb{Q}\right) :=\sup\limits_{X\in \mathcal{A}%
\rho }\mathbb{E}_{\mathbb{Q}}\left[ -X\right] ,  \label{Def._minimal_penalty}
\end{equation}%
and $\mathcal{A}_{\rho }:=\left\{ X\in \mathfrak{M}_{b}:\rho (X)\leq
0\right\} $ is the \textit{acceptance set} of $\rho .$

The penalty $\psi _{\rho }^{\ast }$ is called the \textit{minimal penalty
function} associated to $\rho $ because, for any other penalty function $%
\psi $ fulfilling $\left( \ref{Static_CMR_Robust_representation}\right) $, $%
\psi \left( \mathbb{Q}\right) \geq \psi _{\rho }^{\ast }\left( \mathbb{Q}%
\right) $, for all $\mathbb{Q}\in \mathcal{Q}_{cont}.$ Furthermore, for the
minimal penalty function, the next biduality relation is satisfied 
\begin{equation}
\psi _{\rho }^{\ast }\left( \mathbb{Q}\right) =\sup_{X\in \mathfrak{M}%
_{b}\left( \Omega ,\mathcal{F}\right) }\left\{ \mathbb{E}_{\mathbb{Q}}\left[
-X\right] -\rho \left( X\right) \right\} ,\quad \forall \mathbb{Q\in }%
\mathcal{Q}_{cont}.  \label{static convex rsk msr biduality}
\end{equation}

\begin{remark}
Among the measures of risk, the class of them that are concentrated on the
set of probability measures $\mathcal{Q\subset Q}_{cont}$ are of special
interest. Recall that a function $I:E\subset \mathbb{R}^{\Omega }\rightarrow 
\mathbb{R}$ is \textit{sequentially continuous from below (above)} when $%
\left\{ X_{n}\right\} _{n\in \mathbb{N}}\uparrow X\Rightarrow
\lim_{n\rightarrow \infty }I\left( X_{n}\right) =I\left( X\right) $ (
respectively $\left\{ X_{n}\right\} _{n\in \mathbb{N}}\downarrow
X\Rightarrow \lim_{n\rightarrow \infty }I\left( X_{n}\right) =I\left(
X\right) $). F\"{o}llmer and Schied \cite{FoellSch 2002 a} proved that any
sequentially continuous from below convex measure of risk is concentrated on
the set $\mathcal{Q}$. Later, Kr\"{a}tschmer \cite[Prop. 3 p. 601]%
{Kraetschmer 2005} established that the sequential continuity from below is
not only a sufficient but also a necessary condition in order to have a
representation, by means of the minimal penalty function in terms of
probability measures.
\end{remark}

\section{The market model \label{Sect. Market_Model}}

\setcounter{equation}{0} In this section, we introduce the market model
considered in this paper. It is based on the generalization of the classical
geometric Brownian setting, but in this case the coefficients are not
constant and jumps are included in the model through an exogenous stochastic
process. One of the most debatable feature about an stochastic process used
for modelling stock market prices is the issue about the independent
increments. A remarkable property of the proposed model is the fact that it
need not to have independent increments. Also, it includes a certain
subclase of exponential L\'{e}vy models. This section is concluded with a
characterization of the set of equivalent local martingale measures.

\subsection{General description and martingale measures}

First, consider the stochastic process $Y_{t}$ with dynamics given by 
\begin{equation}
Y_{t}:=\int\limits_{]0,t]}\alpha _{s}ds+\int\limits_{]0,t]}\beta
_{s}dW_{s}+\int\limits_{]0,t]\times \mathbb{R}_{0}}\gamma \left( s,x\right)
\left( \mu \left( ds,dx\right) -\nu \left( dx\right) \ ds\right) ,
\label{Def._Y}
\end{equation}%
where the processes $\alpha ,\beta $ are c\'{a}dl\'{a}g, with $\beta \in 
\mathcal{L}\left( W\right) $ and $\gamma \in \mathcal{G}\left( \mu \right) $
. Throughout we assume that the coefficients $\alpha $, $\beta $ and $\gamma 
$ fulfill the following conditions: 
\begin{equation*}
\begin{array}{cl}
\left( A\ 1\right) & \int\nolimits_{]0,t]}\left( \alpha _{s}\right)
^{2}ds<\infty \quad \forall t\in \mathbb{R}_{+}\quad \mathbb{P}\text{-}a.s.\
. \\ 
\smallskip \  &  \\ 
\left( A\ 2\right) & 0<c\leq \left\vert \beta _{t}\right\vert \quad \forall
t\in \mathbb{R}_{+}\quad \mathbb{P}\text{-}a.s.\ . \\ 
\smallskip \  &  \\ 
\left( A\ 3\right) & \int_{0}^{T}\left( \frac{\alpha _{u}}{\beta _{u}}%
\right) ^{2}du\in \mathcal{L}^{\infty }\left( \mathbb{P}\right) . \\ 
\smallskip \  &  \\ 
\left( A\ 4\right) & \gamma \left( t,\bigtriangleup L_{t}\right) \times 
\mathbf{1}_{\mathbb{R}_{0}}\left( \bigtriangleup L_{t}\right) \geq -1\quad
\forall t\in \mathbb{R}_{+}\quad \mathbb{P}\text{-}a.s.\ . \\ 
\smallskip \  &  \\ 
\left( A\ 5\right) & \left\{ \gamma \left( t,\bigtriangleup L_{t}\right) 
\mathbf{1}_{\mathbb{R}_{0}}\left( \bigtriangleup L_{t}\right) \right\}
_{t\in \mathbb{R}_{+}}\text{ is a locally bounded process.}%
\end{array}%
\end{equation*}

The market model consists of two assets, one of them is the num\'eraire,
having a strictly positive price. The dynamics of the other risky asset will
be modeled as a function of the process $Y_t$ defined above. More
specifically, since we are interested in the problem of robust utility
maximization, the discounted capital process can be written in terms of the
wealth invested in this asset, and hence the problem can be written using
only the dynamics of the discounted price of this asset. For this reason,
throughout we will be concentrated in the dynamics of this price.

The dynamics of the discounted price process $S$ are determined by the
process $Y$ as its Doleans-Dade exponential 
\begin{equation}
S_{t}=S_{0}\mathcal{E}\left( Y_{t}\right) .  \label{Model_1_S_Def.}
\end{equation}%
Condition $\left( A\ 4\right) $ ensures that the price process is
non-negative. This process is an exponential semimartingale, as it would be
the case of an arbitrary semimartingale $Y$, if and only if the following
two conditions are fulfilled: 
\begin{equation}
\begin{array}{cl}
\left( i\right) & S=S\mathbf{1}_{\left[ 0,\tau \right] }\text{, for }\tau
:=\inf \left\{ t>0:S_{t}=0\text{ or }S_{t-}=0\right\} . \\ 
&  \\ 
\left( ii\right) & \frac{1}{S_{t-}}\mathbf{1}_{\left[ S_{t-}\neq 0\right] }%
\text{ is integrable w.r.t. }S.%
\end{array}
\label{exponential semimartingale iff}
\end{equation}%
The first property is conceptually very appropriate when we are interested
in modelling the dynamics of a price process. Recall that a stochastically
continuous semimartingale has independent increments if and only if its
predictable triplet is non-random. Therefore, in general, the price process $%
S$ is not a L\'{e}vy exponential model, because $\left[ Y^{c}\right]
_{t}=\int\nolimits_{0}^{t}\left( \beta _{u}\right) ^{2}du$ need not to be
deterministic. However, observe that the price dynamics $\left( \ref%
{Model_1_S_Def.}\right) $ includes L\'{e}vy exponential models, for L\'{e}vy
processes with $\bigtriangleup L_{t}\geq -1.$

For the model $\left( \ref{Model_1_S_Def.}\right) $ the price process can be
written explicitly as%
\begin{equation}
\begin{array}{rl}
S_{t}= & S_{0}\exp \left\{ \int\limits_{]0,t]}\alpha
_{s}ds+\int\limits_{]0,t]}\beta _{s}dW_{s}+\int\limits_{]0,t]\times \mathbb{R%
}_{0}}\gamma \left( s,x\right) \left( \mu \left( ds,dx\right) -\nu \left(
dx\right) \ ds\right) -\frac{1}{2}\int\limits_{]0,t]}\left( \beta
_{s}\right) ^{2}ds\right\} \\ 
& \times \exp \left\{ \int\limits_{]0,t]\times \mathbb{R}_{0}}\{\ln \left(
1+\gamma \left( s,x\right) \right) -\gamma \left( s,x\right) \}\mu \left(
ds,dx\right) \right\}.%
\end{array}
\label{Model_1_S(t)_Explicit}
\end{equation}%
Observe that $\left( A\ 5\right) $ is a necessary and sufficient condition
for $S$ to be a locally bounded process.

The predictable c\'adl\'ag process $\left\{ \pi _{t}\right\} _{t\in \mathbb{R%
}_{+}}$, satisfying the integrability condition $\int_{0}^{t}\left( \pi
_{s}\right) ^{2}ds<\infty $ $\mathbb{P}$-a.s. for all $t\in \mathbb{R}_{+}$,
shall denote the proportion of wealth at time $t$ invested in the risky
asset $S$. For an initial capital $x $, the discounted wealth $X_{t}^{x,\pi
} $ associated with a self-financing investment strategy $\left( x,\pi
\right) $ fulfills the equation%
\begin{equation}
X_{t}^{x,\pi }=x+\int_{0}^{t}\frac{X_{u-}^{x,\pi }\pi _{u}}{S_{u-}}\mathbf{1}%
_{\left[ S_{u-}\neq 0\right] }dS_{u}.  \label{Model_1_Def._X}
\end{equation}

We say that a self-financing strategy $\left( x,\pi \right) $ is \textit{%
admissible} if the wealth process satisfies $X_{t}^{x,\pi }>0$ for all $t>0$%
. The class of admissible wealth processes with initial wealth less than or
equal to $x$ is denoted by $\mathcal{X}\left( x\right) .$

Next result characterizes the class of \textit{\ equivalent local martingale
measures} defined as 
\begin{equation}
\mathcal{Q}_{elmm}:=\{\mathbb{Q}\in \mathcal{Q}_{\approx }(\mathbb{P}):%
\mathcal{X}\left( 1\right) \subset \mathcal{M}_{loc}\left( \mathbb{Q}\right)
\}=\{\mathbb{Q}\in \mathcal{Q}_{\approx }(\mathbb{P}):S\in \mathcal{M}%
_{loc}\left( \mathbb{Q}\right) \}.  \label{Def._Qelmm(P)}
\end{equation}%
The class of density processes associated with $\mathcal{Q}_{elmm}$ is
denoted by $\mathcal{D}_{elmm}\left( \mathbb{P}\right) .$ Kunita \cite%
{Kunita 2004} gave conditions on the parameters $\left( \theta _{0},\theta
_{1}\right) $ of a measure $\mathbb{Q}\in \mathcal{Q}_{\approx }$ in order
that it is a local martingale measure for a L\'{e}vy exponential model i.e.
when $S=\mathcal{E}\left( L\right) $. Observe that in this case $\mathcal{Q}%
_{elmm}\left( S\right) =\mathcal{Q}_{elmm}\left( L\right) .$ Next
proposition extends those results, giving conditions on the parameters $%
\left( \theta _{0},\theta _{1}\right) $ under which an equivalent measure is
a local martingale measure for the price model (\ref{Model_1_S_Def.}).

\begin{proposition}
\label{Q ELMM iff} Given $\mathbb{Q}\in \mathcal{Q}_{\approx }$, let $\theta
_{0}\in \mathcal{L}\left( W\right) $ and $\theta _{1}\in \mathcal{G}\left(
\mu \right) $ be the corresponding processes describing the density
processes found in Lemma \ref{Q<<P =>}. Then, the following equivalence
holds: 
\begin{equation}
\mathbb{Q}\in \mathcal{Q}_{elmm}\Longleftrightarrow \alpha _{t}+\beta
_{t}\theta _{0}\left( t\right) +\int\nolimits_{\mathbb{R}_{0}}\gamma \left(
t,x\right) \theta _{1}\left( t,x\right) \nu \left( dx\right) =0\ \forall
t\geq 0\ \quad \mathbb{P}\text{-a.s. }  \label{Qelmm <=>}
\end{equation}
\end{proposition}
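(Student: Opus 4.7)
The plan is to reduce the question about $S$ to a statement about the drift of $Y$ under $\mathbb{Q}$, and then identify that drift explicitly via Girsanov's theorem applied to the Doléans-Dade representation of the density process.

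First I would observe that, since $S_{t-}>0$ for all $t$ (which follows from $S_0>0$ together with condition $(A\,4)$, guaranteeing $\Delta Y > -1$) and $S=S_0\mathcal{E}(Y)$ satisfies $dS=S_{-}\,dY$, the process $S$ is a $\mathbb{Q}$-local martingale if and only if $Y$ is a $\mathbb{Q}$-local martingale: in one direction $Y=\int S_{-}^{-1}\mathbf{1}_{[S_{-}\neq 0]}\,dS$ transfers the local-martingale property back to $Y$ (this is precisely the content of $(ii)$ in (\ref{exponential semimartingale iff})), and conversely $S_{-}$ is a locally bounded predictable integrand against $Y$. Thus the proposition becomes a statement about the $\mathbb{Q}$-drift of $Y$.

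Next I would apply Girsanov's theorem for semimartingales (Theorem III.3.24 in Jacod–Shiryaev, whose hypotheses are verified by Lemma~\ref{Q<<P =>} together with (\ref{Q<<P=>_miu_wrt_Q})). Under $\mathbb{Q}$, the process
\[
\widetilde{W}_t := W_t - \int_0^t \beta_s^{-1}\cdot\beta_s\theta_0(s)\,ds = W_t - \int_0^t \theta_0(s)\,ds
\]
is a $\mathbb{Q}$-Brownian motion (here I use $(A\,2)$ to make sense of the Girsanov drift cleanly), and the $\mathbb{Q}$-compensator of $\mu$ is $(1+\theta_1(s,x))\nu(dx)\,ds$ by (\ref{Q<<P=>_miu_wrt_Q}). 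Substituting into (\ref{Def._Y}) and rearranging,
\begin{equation*}
Y_t \;=\; \int_0^t\!\Bigl[\alpha_s+\beta_s\theta_0(s)+\!\int_{\mathbb{R}_0}\!\gamma(s,x)\theta_1(s,x)\nu(dx)\Bigr]ds \;+\; \int_0^t\!\beta_s\,d\widetilde{W}_s \;+\; \int\!\gamma(s,x)\bigl(\mu(ds,dx)-(1+\theta_1)\nu(dx)\,ds\bigr),
\end{equation*}
where the last two terms are $\mathbb{Q}$-local martingales by construction. This is a canonical (special-semimartingale) decomposition of $Y$ under $\mathbb{Q}$, and by uniqueness of that decomposition $Y\in\mathcal{M}_{loc}(\mathbb{Q})$ if and only if the absolutely continuous drift vanishes for $dt$-a.e.\ $s$ on $[0,T]$, $\mathbb{P}$-a.s.; this is precisely (\ref{Qelmm <=>}).

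The main technical obstacle, and where I would spend most of the bookkeeping effort, is making the Girsanov step fully rigorous: specifically, showing that the inner integral $\int_{\mathbb{R}_0}\gamma(s,x)\theta_1(s,x)\nu(dx)$ is well-defined and locally integrable in $s$ so that the drift term is a genuine finite-variation predictable process and the decomposition above is truly the canonical one. This requires combining $\gamma\in\mathcal{G}(\mu)$ with (\ref{G(miu) Definition}), the local boundedness hypothesis $(A\,5)$ on $\{\gamma(t,\Delta L_t)\mathbf{1}_{\mathbb{R}_0}(\Delta L_t)\}_t$, and the defining conditions on $\theta_1\in\mathcal{G}(\mu)$, together with the observation following (\ref{Aloc_iff_(3)}) that the product of an element of $\mathcal{G}(\mu)$ against a locally bounded one lies in $\mathcal{A}_{loc}^{+}$. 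Once this integrability is in place, the conclusion follows immediately from the reduction to $Y$ in the first step and the uniqueness of the canonical decomposition in the second.
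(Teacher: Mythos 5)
Your argument is correct in substance but follows a genuinely different route from the paper's. The paper stays under $\mathbb{P}$: it writes $S_tD_t=S_0\,\mathcal{E}\left(Y+Z^{\theta}+\left[Y,Z^{\theta}\right]\right)_t$ via Yor's formula, computes the bracket $\left[Y,Z^{\theta}\right]$ explicitly, splits $SD$ into a $\mathbb{P}$-local-martingale part plus the predictable finite-variation part $\int S_{u-}D_{u-}\bigl(\alpha_u+\beta_u\theta_0(u)+\int\gamma\theta_1\,\nu(dx)\bigr)du$, and concludes because a predictable local martingale with locally integrable variation is constant. You instead reduce the question to the $\mathbb{Q}$-local-martingale property of $Y$ and read off the $\mathbb{Q}$-drift of $Y$ from Girsanov's theorem, concluding by uniqueness of the canonical decomposition; these are two faces of the same computation, since $\left[Y,Z^{\theta}\right]$ is exactly the Girsanov correction to the drift of $Y$. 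Your version avoids Yor's formula and makes (\ref{Qelmm <=>}) read literally as ``the $\mathbb{Q}$-drift of $Y$ vanishes''; the paper's version avoids invoking the Girsanov theorem for random measures and manipulates only objects under $\mathbb{P}$. Both hinge on precisely the technical point you isolate: $(A\,5)$ together with $\theta_1\in\mathcal{G}\left(\mu\right)$ gives $\int\left\vert\gamma\theta_1\right\vert d\mu\in\mathcal{A}_{loc}$, so the jump integral splits and the drift is a genuine element of $\mathcal{V}$ (the paper also records the Cauchy--Schwarz bound for $\int\beta_s\theta_0\,ds$ using $\beta,\theta_0\in\mathcal{L}\left(W\right)$, which you would need as well). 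Incidentally, $(A\,2)$ plays no role in your Girsanov step, since the drift correction of $W$ is $\int\theta_0\,ds$ with no $\beta$ involved.

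One small inaccuracy: $(A\,4)$ gives $\gamma\left(t,\triangle L_t\right)\mathbf{1}_{\mathbb{R}_0}\left(\triangle L_t\right)\geq -1$, not $>-1$, so $S$ may be absorbed at $0$ and $S_{t-}$ need not remain strictly positive; your two-way transfer between $S$ and $Y$, and hence the ``for all $t$'' in (\ref{Qelmm <=>}), is only forced on $\left[0,\tau\right]$ with $\tau$ as in (\ref{exponential semimartingale iff}). The paper's proof has the identical blind spot, because its drift term carries the factor $S_{u-}D_{u-}$, so this does not count against you relative to the published argument.
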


\begin{proof}
Let $\mathbb{Q}\in \mathcal{Q}_{\approx }$ be an equivalent probability
measure with density process given by $D_{t}:=\mathbb{E}\left[ \left. d%
\mathbb{Q}/d\mathbb{P}\right\vert \mathcal{F}_{t}\right] =\mathcal{E}\left(
Z^{\theta }\right) _{t}$, where the last equality follows from Lemma \ref%
{Q<<P =>}. Then, we have that 
\begin{equation*}
S\in \mathcal{M}_{loc}^{1}\left( \mathbb{Q}\right) \Longleftrightarrow SD\in 
\mathcal{M}_{loc}^{1}\left( \mathbb{P}\right) .
\end{equation*}

Since $\theta _{1},\gamma \in \mathcal{G}\left( \mu \right) ,$ from $\left(
A\ 5\right) $ the process $\left\{ \gamma \left( t,\bigtriangleup
L_{t}\right) \mathbf{1}_{\mathbb{R}_{0}}\left( \bigtriangleup L_{t}\right)
\right\} _{t\in \mathbb{R}_{+}}$ is a locally bounded process, we have that $%
\int \gamma \theta _{1}d\mu \in \mathcal{A}_{loc}$ , which yields that $%
\gamma \theta _{1}\in \mathcal{G}\left( \mu \right) $ and 
\begin{equation*}
\int \gamma \theta _{1}d\left\{ \mu -\mu ^{\mathcal{P}}\right\} =\int \gamma
\theta _{1}d\mu -\int \gamma \theta _{1}d\mu ^{\mathcal{P}}.
\end{equation*}%
Therefore, 
\begin{equation*}
\left[ Y,Z^{\theta }\right] _{t}=\int\limits_{0}^{t}\beta _{s}\theta
_{0}ds+\int\limits_{]0,t]\times \mathbb{R}_{0}}\gamma \theta _{1}d\left\{
\mu -\mu ^{\mathcal{P}}\right\} +\int\limits_{]0,t]\times \mathbb{R}%
_{0}}\gamma \theta _{1}d\mu ^{\mathcal{P}}.
\end{equation*}%
Now, we write 
\begin{equation*}
S_{t}D_{t}=S_{0}\mathcal{E}\left( Y\right) _{t}\mathcal{E}\left( Z^{\theta
}\right) _{t}=S_{0}\mathcal{E}\left( Y+Z^{\theta }+\left[ Y,Z^{\theta }%
\right] \right) _{t},
\end{equation*}%
and making some rearrangements we have that 
\begin{eqnarray*}
&&S_{t}D_{t} \\
&=&S_{0}+\int S_{u-}D_{u-}d\left\{ Y+Z^{\theta }+\left[ Y,Z^{\theta }\right]
\right\} _{u} \\
&=&S_{0}+\int S_{u-}D_{u-}d\left\{ \int \left( \beta +\theta _{0}\right)
dW+\int \left( \gamma +\theta _{1}+\gamma \theta _{1}\right) d\left\{ \mu
-\mu ^{\mathcal{P}}\right\} \right\} _{u} \\
&&+\int S_{u-}D_{u-}d\left\{ \int \left( \alpha _{s}+\beta _{s}\theta
_{0}\left( s\right) +\int \gamma \theta _{1}\nu \left( dx\right) \right)
ds\right\} _{u}.
\end{eqnarray*}%
On the other hand, observe that 
\begin{equation*}
\int S_{u-}D_{u-}d\left\{ \int \left( \beta +\theta _{0}\right) dW+\int
\left( \gamma +\theta _{1}+\gamma \theta _{1}\right) d\left\{ \mu -\mu ^{%
\mathcal{P}}\right\} \right\} _{u}
\end{equation*}%
belongs to the set of local martingales $\mathcal{M}_{loc},$ and 
\begin{equation*}
\int S_{u-}D_{u-}d\left\{ \int \left( \alpha _{s}+\beta _{s}\theta
_{0}\left( s\right) +\int \gamma \theta _{1}\nu \left( dx\right) \right)
ds\right\}
\end{equation*}%
is a finite variation continuous process in $\mathcal{V}^{c}$. To verify
this claim, observe first that $\left( A\ 1\right) $ implies that $%
\int\nolimits_{0}^{t}\alpha _{s}ds\in $ $\mathcal{V}.$ Further, for $\beta
_{s},\theta _{0}\in \mathcal{L}\left( W\right) $ we know that $%
\int\nolimits_{\left[ 0,t\right] }\left\{ \beta _{s}\right\} ^{2}ds<\infty $ 
$\mathbb{P}$-a.s., and $\int\nolimits_{\left[ 0,t\right] }\left\{ \theta
_{0}\left( s\right) \right\} ^{2}ds<\infty $ $\mathbb{P}$-a.s., and from the
Rogers-H\"{o}lder inequality%
\begin{equation*}
\int_{0}^{t}\left\vert \beta _{s}\right\vert \left\vert \theta _{0}\left(
s\right) \right\vert ds\leq \left( \int_{0}^{t}\left( \beta _{s}\right)
^{2}ds\right) ^{\frac{1}{2}}\left( \int_{0}^{t}\left( \theta _{0}\left(
s\right) \right) ^{2}ds\right) ^{\frac{1}{2}}<\infty .
\end{equation*}%
Then, $\int\limits_{0}^{t}\beta _{s}\theta _{0}ds$ is of finite variation
due to the absolutely integrability of the integrand, i.e. $%
\int\nolimits_{0}^{t}\beta _{s}\theta _{0}ds\in $ $\mathcal{V}$. Since $\int
\gamma \left( s,x\right) \theta _{1}\left( s,x\right) \mu \left(
ds,dx\right) \in \mathcal{A}_{loc}$, it follows that 
\begin{equation*}
\int\nolimits_{\left[ 0,t\right] \times \mathbb{R}_{0}}\gamma \left(
s,x\right) \theta _{1}\left( s,x\right) \nu \left( dx\right) ds\in \mathcal{%
V\ }\mathbb{P}-a.s.\forall t\in \mathbb{R}_{+}.
\end{equation*}%
Summarizing, 
\begin{equation*}
\int\nolimits_{0}^{t}\alpha _{s}ds+\int\nolimits_{0}^{t}\beta _{s}\theta
_{0}ds+\int\limits_{]0,t]\times \mathbb{R}_{0}}\gamma \theta _{1}\nu \left(
dx\right) ds\in \mathcal{V}.
\end{equation*}%
The equivalence (\ref{Qelmm <=>}) follows now observing that a predictable
local martingale with locally integrable variation is constant.
\end{proof}

\subsection{Minimal penalties}

\label{MinimalPenalties}

Now, we shall introduce a family of penalty functions for the density
processes described in Section \ref{Sub_Sect:_Density_processes}, for the
absolutely continuous measures $\mathbb{Q}\in \mathcal{Q}_{\ll }\left( 
\mathbb{P}\right) $.

Let $h_{0}$\thinspace and $h_{1}$ be $\mathbb{R}_{+}$-valued convex
functions defined in $\mathbb{R}$ with $h_{0}\left( 0\right) =0=h_{1}\left(
0\right) $, and $h:\mathbb{R}_{+}\rightarrow \mathbb{R}_{+}$ be increasing
convex function continuous at zero with $h\left( 0\right) =0$. Define the
penalty function 
\begin{equation}
\begin{array}{rl}
\vartheta \left( \mathbb{Q}\right) := & \mathbb{E}_{\mathbb{Q}}\left[
\int\limits_{0}^{T}h\left( h_{0}\left( \theta _{0}\left( t\right) \right)
+\int\nolimits_{\mathbb{R}_{0}}\delta \left( t,x\right) h_{1}\left( \theta
_{1}\left( t,x\right) \right) \nu \left( dx\right) \right) dt\right] \mathbf{%
1}_{\mathcal{Q}_{\ll }}\left( \mathbb{Q}\right) \\ 
& +\infty \times \mathbf{1}_{\mathcal{Q}_{cont}\setminus \mathcal{Q}_{\ll
}}\left( \mathbb{Q}\right) ,%
\end{array}
\label{Def._penalty_theta}
\end{equation}%
where $\theta _{0},$ $\theta _{1}$ are the processes associated to $\mathbb{Q%
}$ from Lemma \ref{Q<<P =>} and $\delta \left( t,x\right) :\mathbb{R}%
_{+}\times \mathbb{R}_{0}\rightarrow \mathbb{R}_{+}$ is an arbitrary but fix
nonnegative function $\delta \left( t,x\right) \in \mathcal{G}\left( \mu
\right) $. Further, define the convex measure of risk 
\begin{equation}
\rho \left( X\right) :=\sup_{\mathbb{Q\in }\mathcal{Q}_{\ll }(\mathbb{P}%
)}\left\{ \mathbb{E}_{\mathbb{Q}}\left[ -X\right] -\vartheta \left( \mathbb{Q%
}\right) \right\} .  \label{rho def.}
\end{equation}%
Notice that $\rho $ is a normalized and sensitive measure of risk . Next
theorem establishes the minimality of the penalty function introduced above
for the risk measure $\rho $. The proof can be found in \cite{HH-PH}.

\begin{theorem}
\label{theta=minimal penalty function} The penalty function $\vartheta $
defined in $\left( \ref{Def._penalty_theta}\right) $ is the minimal penalty
function of the convex risk measure $\rho $ given by $\left( \ref{rho def.}%
\right) $.
\end{theorem}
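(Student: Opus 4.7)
The plan is to verify the biduality characterization of the minimal penalty function, namely that $\vartheta(\mathbb{Q}) = \sup_{X\in \mathfrak{M}_b}\{\mathbb{E}_\mathbb{Q}[-X]-\rho(X)\}$ holds for every $\mathbb{Q}\in\mathcal{Q}_{cont}$. One of the two inequalities, $\psi^*_\rho(\mathbb{Q}) \le \vartheta(\mathbb{Q})$, is immediate from the definition of $\rho$: for every bounded $X$, $\rho(X)\ge \mathbb{E}_\mathbb{Q}[-X]-\vartheta(\mathbb{Q})$, so $\vartheta(\mathbb{Q})\ge \mathbb{E}_\mathbb{Q}[-X]-\rho(X)$ and the supremum over $X$ gives the claim. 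The substantive content of the theorem is therefore the reverse inequality, and essentially all the work lies there.

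For the reverse direction I would fix $\mathbb{Q}\in\mathcal{Q}_\ll$ with $\vartheta(\mathbb{Q})<\infty$ and associated coefficients $(\theta_0,\theta_1)$ from the preceding lemma, and construct an approximating sequence of bounded random variables via a Fenchel--Young recipe. Writing $\Phi_t := h\bigl(h_0(\theta_0(t))+\int_{\mathbb{R}_0}\delta(t,x)h_1(\theta_1(t,x))\nu(dx)\bigr)$ for the integrand of $\vartheta(\mathbb{Q})$, pick measurable subdifferential selections $\xi_0(t)$ and $\xi_1(t,x)$ from the partial subdifferentials of the convex map $(\theta_0,\theta_1)\mapsto\Phi_t$ at $(\theta_0(t),\theta_1(t,x))$, and set
\begin{equation*}
\widetilde{X} := -\int_0^T\Phi_t\,dt + \int_0^T \xi_0(t)\,dW_t + \int_{[0,T]\times \mathbb{R}_0}\xi_1(t,x)\bigl(\mu(dt,dx)-\nu(dx)dt\bigr),
\end{equation*}
with $X_n$ its truncation at level $n$. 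Under $\mathbb{Q}$, Girsanov converts the two compensated integrals into terms whose drift, together with the explicit integrand and the Fenchel equality $\xi_0\theta_0+\int\xi_1\theta_1\,\nu(dx)=\Phi_t+\Phi_t^*(\xi)$, yield $\mathbb{E}_\mathbb{Q}[-X_n]\nearrow\vartheta(\mathbb{Q})$ by monotone convergence. Under any competitor $\mathbb{Q}'\in\mathcal{Q}_\ll$ with coefficients $(\theta_0',\theta_1')$, the same Girsanov computation combined with the pointwise Fenchel--Young bound $\xi_0\theta_0'+\int\xi_1\theta_1'\,\nu(dx)\le \Phi_t(\theta')+\Phi_t^*(\xi)$ gives $\mathbb{E}_{\mathbb{Q}'}[-\widetilde{X}]\le\vartheta(\mathbb{Q}')$, and therefore $\rho(X_n)\to 0$. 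Combining these two limits yields $\psi^*_\rho(\mathbb{Q})\ge\vartheta(\mathbb{Q})$.

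The step I expect to be the main obstacle is the uniform-in-$\mathbb{Q}'$ bound in the last display: one must verify that the pointwise convex inequality, born from the layered convexity of $h\circ(h_0,h_1)$ with the nonnegative weight $\delta(t,x)$, survives the Girsanov change of drift uniformly over all competitors, and that the truncation of $\widetilde{X}$ at level $n$ does not destroy it. The integrability required to pass to the limit comes from $\vartheta(\mathbb{Q})<\infty$ on the $\mathbb{Q}$-side and from nonnegativity of $\Phi_t$ (via Fatou) on the $\mathbb{Q}'$-side. Finally, the degenerate cases $\mathbb{Q}\in \mathcal{Q}_{cont}\setminus\mathcal{Q}_\ll$ and $\mathbb{Q}\in\mathcal{Q}_\ll$ with $\vartheta(\mathbb{Q})=+\infty$ are handled separately: for the latter one rescales $\widetilde{X}$ by $\lambda\uparrow\infty$ to drive $\mathbb{E}_\mathbb{Q}[-\lambda X_n]-\rho(\lambda X_n)$ to $+\infty$, and for the former one exhibits bounded indicators supported on the singular part of $\mathbb{Q}$ to obtain the same blow-up, matching the value $+\infty$ assigned by $\vartheta$ on $\mathcal{Q}_{cont}\setminus\mathcal{Q}_\ll$.
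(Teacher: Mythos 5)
First, a point of comparison: the paper does not actually prove this theorem in the text --- it states it and refers the reader to the companion paper \cite{HH-PH} --- so there is no in-paper argument to match yours against line by line. Your overall architecture (the inequality $\psi_{\rho }^{\ast }\leq \vartheta $ being immediate from the representation defining $\rho $, and the reverse inequality obtained by constructing near-optimal bounded positions from subgradients of the integrand via Fenchel--Young and Girsanov) is the natural route for results of this type and is in the spirit of the cited reference.

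However, your central construction fails as written. Put $\langle \xi ,\theta \rangle _{t}:=\xi _{0}(t)\theta _{0}(t)+\int_{\mathbb{R}_{0}}\xi _{1}(t,x)\theta _{1}(t,x)\nu (dx)$. Under $\mathbb{Q}$, Girsanov turns the two stochastic integrals in your $\widetilde{X}$ into $\langle \xi ,\theta \rangle _{t}\,dt$ plus $\mathbb{Q}$-local martingales, so
\begin{equation*}
\mathbb{E}_{\mathbb{Q}}[-\widetilde{X}]=\mathbb{E}_{\mathbb{Q}}\Bigl[\int_{0}^{T}\bigl(\Phi _{t}-\langle \xi ,\theta \rangle _{t}\bigr)\,dt\Bigr]=-\mathbb{E}_{\mathbb{Q}}\Bigl[\int_{0}^{T}\Phi _{t}^{\ast }(\xi )\,dt\Bigr]\leq 0,
\end{equation*}
using the very Fenchel equality $\langle \xi ,\theta \rangle _{t}=\Phi _{t}+\Phi _{t}^{\ast }(\xi )$ you invoke, together with $\Phi _{t}^{\ast }(\xi )\geq -\Phi _{t}(0)=-h(0)=0$. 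Hence $\mathbb{E}_{\mathbb{Q}}[-X_{n}]$ cannot increase to $\vartheta (\mathbb{Q})$: the drift term must be the \emph{conjugate}, not minus the integrand. The correct test position is $\widetilde{X}=\int_{0}^{T}\Phi _{t}^{\ast }(\xi )\,dt-\int_{0}^{T}\xi _{0}\,dW-\int \xi _{1}\,d(\mu -\nu \,dt)$, for which $\mathbb{E}_{\mathbb{Q}}[-\widetilde{X}]=\mathbb{E}_{\mathbb{Q}}[\int (\langle \xi ,\theta \rangle _{t}-\Phi _{t}^{\ast }(\xi ))dt]=\vartheta (\mathbb{Q})$ and $\mathbb{E}_{\mathbb{Q}^{\prime }}[-\widetilde{X}]\leq \vartheta (\mathbb{Q}^{\prime })$ by Fenchel--Young. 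Beyond this sign repair, the two issues you flag are genuine gaps rather than routine checks: for an arbitrary competitor $\mathbb{Q}^{\prime }$ the stochastic integrals are only $\mathbb{Q}^{\prime }$-local martingales, so killing them in expectation needs a localization/Fatou argument based on a one-sided bound; and $\limsup_{n}\rho (X_{n})\leq 0$ for the truncations requires controlling $\mathbb{E}_{\mathbb{Q}^{\prime }}[((-\widetilde{X})^{-}-n)^{+}]$ uniformly over measures with bounded penalty, which does not follow from convexity alone. Finally, for a finitely additive $\mathbb{Q}\in \mathcal{Q}_{cont}\setminus \mathcal{Q}_{\ll }$ there is in general no ``singular part'' event to exploit; the standard way to force $\psi _{\rho }^{\ast }=+\infty $ there is continuity from below of $\rho $ (the Kr\"{a}tschmer criterion the paper itself recalls), which should be verified for this $\rho $ rather than asserted.
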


\section{Robust utility maximization \label{Sect Robust Utility Maximization}%
}

\setcounter{equation}{0} In this section the connection between penalty
functions and the existence of solutions to the penalized robust expected
utility problem is established. We also formulate the dual problem in terms
of control processes for an arbitrary utility function.

\subsection{Penalties and solvability}

Let us now introduce the class 
\begin{equation}
\mathcal{C}:=\left\{ \mathcal{E}\left( Z^{\xi }\right) :%
\begin{array}{l}
\xi :=\left( \xi ^{\left( 0\right) },\xi ^{\left( 1\right) }\right) ,\ \xi
^{\left( 0\right) }\in \mathcal{L}\left( W\right) ,\ \xi ^{\left( 1\right)
}\in \mathcal{G}\left( \mu \right) ,\text{ with} \\ 
\alpha _{t}+\beta _{t}\xi _{t}^{\left( 0\right) }+\int\limits_{\mathbb{R}%
_{0}}\gamma \left( t,x\right) \xi ^{\left( 1\right) }\left( t,x\right) \nu
\left( dx\right) =0\ \text{Lebesgue }\forall t%
\end{array}%
\ \right\} ,  \label{Def._C=Control_Set}
\end{equation}%
with $Z^{\xi }$ as in (\ref{Def._Ztheta(t)}). Observe that $\mathcal{D}%
_{elmm}\left( \mathbb{P}\right) \subset \mathcal{C}\subset \mathcal{Y}_{%
\mathbb{P}}\left( 1\right) $; see (\ref{Def._Yq(y)}) for the definition of $%
\mathcal{Y}_{\mathbb{P}}\left( 1\right) $. It should be pointed out that
this relation between these three sets plays a crucial role in the
formulation of the dual problem, even in the non-robust case.

\begin{theorem}
\label{Power_utility:_Existence} For $q\in \left( -\infty ,1\right)
\backslash \left\{ 0\right\} $, let $U\left( x\right) :=\frac{1}{q}x^{q}$ be
the power utility function, and consider the functions $h,h_{0}$ and $h_{1}$
as in Subsection \ref{MinimalPenalties}, satisfying the following conditions:%
\begin{equation*}
\begin{array}{l}
h\left( x\right) \geq \exp \left( \kappa _{1}x^{2}\right) -1\;\text{\textrm{%
where }}\kappa _{1}:=1\vee 2\left( 2p^{2}+p\right) T\;\text{\textrm{and }}%
\;p:=\frac{q}{1-q}, \\ 
h_{0}\left( x\right) \geq \left\vert x\right\vert , \\ 
h_{1}\left( x\right) \geq \frac{\left\vert x\right\vert }{c},\text{\textrm{%
for }}c\text{ \textrm{as in assumption }}\left( A\ 2\right) .%
\end{array}%
\end{equation*}%
Then, for the penalty function 
\begin{equation*}
\vartheta _{x^{q}}\left( \mathbb{Q}\right) :=\mathbb{E}_{\mathbb{Q}}\left[
\int\limits_{0}^{T}h\left( h_{0}\left( \theta _{0}\left( t\right) \right)
+\int\nolimits_{\mathbb{R}_{0}}\left\vert \gamma \left( t,x\right)
\right\vert h_{1}\left( \theta _{1}\left( t,x\right) \right) \nu \left(
dx\right) \right) dt\right] ,
\end{equation*}%
the penalized robust utility maximization problem $\left( \ref%
{robust_probl._primal_value_funct}\right) $ has a solution.
\end{theorem}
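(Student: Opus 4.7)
The plan is to verify the finiteness condition (\ref{vQ(y)<oo_for_y>0_Q_nice}), which by items (vii)--(xii) recalled in the introduction automatically delivers a solution to (\ref{robust_probl._primal_value_funct}). For $q\in(-\infty,0)$, Remark~\ref{V<0=>Solution exist} settles the case at once since $V(y)=\tfrac{1-q}{q}y^{-p}\le 0$ with $p:=q/(1-q)$, so I concentrate on $q\in(0,1)$, where $p>0$ and $V$ is positive. Since for any ELMM $\tilde{\mathbb{Q}}$ of $\mathbb{Q}$ the density $y\,d\tilde{\mathbb{Q}}/d\mathbb{Q}$ lies in $\mathcal{Y}_{\mathbb{Q}}(y)$, it is enough to produce, for each fixed $\mathbb{Q}\in\mathcal{Q}_{\approx}^{\vartheta_{x^q}}$, one $\tilde{\mathbb{Q}}\in\mathcal{Q}_{elmm}(\mathbb{Q})$ with $\mathbb{E}_{\mathbb{Q}}[(d\tilde{\mathbb{Q}}/d\mathbb{Q})^{-p}]<\infty$.

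For the construction of $\tilde{\mathbb{Q}}$, I take the \emph{minimal-type} parameters $\theta_{0}^{\ast}:=-\alpha/\beta$ and $\theta_{1}^{\ast}\equiv 0$ suggested by Proposition~\ref{Q ELMM iff}, which trivially satisfy (\ref{Qelmm <=>}). Assumption $(A\,3)$ (together with $(A\,2)$) gives $\int_{0}^{T}(\theta_{0}^{\ast})^{2}\,ds\in\mathcal{L}^{\infty}(\mathbb{P})$, so Novikov's criterion turns $\mathcal{E}(\int\theta_{0}^{\ast}\,dW)$ into a true $\mathbb{P}$-martingale, yielding $\tilde{\mathbb{Q}}\approx\mathbb{P}\approx\mathbb{Q}$. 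Changing measures and expanding both densities via (\ref{D(t) explicita}) with coefficients $(\theta_{0},\theta_{1})$ for $\mathbb{Q}$ and $(\theta_{0}^{\ast},0)$ for $\tilde{\mathbb{Q}}$, the integrand of
\begin{equation*}
\mathbb{E}_{\mathbb{Q}}\Bigl[\bigl(\tfrac{d\tilde{\mathbb{Q}}}{d\mathbb{Q}}\bigr)^{-p}\Bigr]
=\mathbb{E}_{\mathbb{P}}\Bigl[\bigl(\tfrac{d\mathbb{Q}}{d\mathbb{P}}\bigr)^{1+p}\bigl(\tfrac{d\tilde{\mathbb{Q}}}{d\mathbb{P}}\bigr)^{-p}\Bigr]
\end{equation*}
factors as $\mathcal{E}(N)_{T}\exp(B_{T})$, where $N$ is a $\mathbb{P}$-local martingale combining the Brownian stochastic integral $\int[(1+p)\theta_{0}-p\theta_{0}^{\ast}]\,dW$ and the compensated jump integral $(1+p)\int\theta_{1}\,d(\mu-\nu\,ds)$, and $B_{T}$ is the residual drift. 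Localizing $N$ and applying Fatou reduces the task to bounding $\mathbb{E}_{\mathbb{P}}[\exp(B_{T})]$.

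A direct algebraic manipulation collapses the Brownian part of $B_{T}$ into the clean form $\frac{p(p+1)}{2}\int_{0}^{T}(\theta_{0}-\theta_{0}^{\ast})^{2}\,ds$, which after Young's inequality and $(A\,3)$ is dominated by $(p^{2}+p)\int_{0}^{T}\theta_{0}^{2}\,ds$ up to an $\mathcal{L}^{\infty}(\mathbb{P})$ constant. The jump part of $B_T$ takes the form $\int\!\int[(1+p)\ln(1+\theta_{1})-\ln(1+(1+p)\theta_{1})]\,d\mu$; splitting off a further local martingale into $N$ (justified by the local boundedness of $\gamma(t,\Delta L_{t})$ from $(A\,5)$) and estimating the resulting compensator piece against $\nu(dx)\,dt$ yields an overall bound for $B_{T}$ of the form $(2p^{2}+p)\int_{0}^{T}g(t)^{2}\,dt$ up to a bounded constant, where $g(t):=|\theta_{0}(t)|+\int_{\mathbb{R}_{0}}|\gamma(t,x)|\,|\theta_{1}(t,x)|\,\nu(dx)$. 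Jensen's inequality applied to the probability measure $T^{-1}dt$ and the convex map $u\mapsto e^{Tu}$ then gives
\begin{equation*}
\exp\!\Bigl((2p^{2}+p)\!\int_{0}^{T}\!g(t)^{2}\,dt\Bigr)
\leq \tfrac{1}{T}\!\int_{0}^{T}\!\exp\!\bigl((2p^{2}+p)T\,g(t)^{2}\bigr)\,dt,
\end{equation*}
and the hypotheses $\kappa_{1}\geq 2(2p^{2}+p)T$, $h_{0}(x)\geq|x|$, $h_{1}(x)\geq|x|/c$, $|\beta|\geq c$ (from $(A\,2)$, which absorbs the factor $1/c$ in the jump comparison) and $h(x)+1\geq\exp(\kappa_{1}x^{2})$ dominate the right-hand-side integrand pointwise by $1+h(h_{0}(\theta_{0}(t))+\int|\gamma(t,x)|h_{1}(\theta_{1}(t,x))\,\nu(dx))$. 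Fubini together with $\vartheta_{x^{q}}(\mathbb{Q})<\infty$ closes the bound.

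The most delicate step I foresee is the jump bookkeeping: the quantity $(1+p)\ln(1+\theta_{1})-\ln(1+(1+p)\theta_{1})$ behaves markedly differently for $p<1$ and $p\geq 1$ (in particular, near the boundary $\theta_{1}=-1/(1+p)$ it can blow up), and the careful extraction of a local $\mathbb{Q}$-martingale piece so that the residual drift is estimated against the $\mathbb{P}$-compensator $\nu(dx)\,dt$ (rather than the $\mathbb{Q}$-compensator $(1+\theta_{1})\,\nu(dx)\,dt$) depends crucially on the local boundedness supplied by $(A\,5)$; this is in my view the technical crux of the argument, and the exponential threshold baked into the hypothesis $h(x)\geq e^{\kappa_{1}x^{2}}-1$ is precisely calibrated so that the penalty absorbs the resulting estimate.
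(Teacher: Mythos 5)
Your overall reduction --- verify condition (\ref{vQ(y)<oo_for_y>0_Q_nice}) and invoke the duality results (vii)--(xii), disposing of $q<0$ via Remark \ref{V<0=>Solution exist} --- is the same as the paper's, and your treatment of the Brownian part (Novikov via $(A\,2)$--$(A\,3)$, Cauchy--Schwarz against a positive local martingale bounded by a supermartingale argument, Jensen with respect to $T^{-1}dt$, and absorption into $h$ via $\kappa_1$) reproduces the paper's final step almost verbatim. The gap is in your choice of dual element and the ensuing jump bookkeeping. The penalty $\vartheta_{x^q}$ controls $\theta_1$ only through the $\gamma$-weighted quantity $\int_{\mathbb{R}_0}\left\vert \gamma(t,x)\right\vert h_1(\theta_1(t,x))\,\nu(dx)\geq \frac{1}{c}\left\vert\int\gamma\theta_1\,\nu(dx)\right\vert$; it gives no control whatsoever over $\theta_1$ on the part of $\mathbb{R}_0$ where $\gamma$ vanishes or is small. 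With your choice $\theta_1^{\ast}\equiv 0$, the residual drift $B_T$ contains the compensator term $\int_0^T\int_{\mathbb{R}_0}\bigl[(1+\theta_1)^{1+p}-1-(1+p)\theta_1\bigr]\,\nu(dx)\,dt$, which is nonnegative, carries no $\gamma$ weight, and grows superlinearly in $\theta_1$; it cannot be dominated by $(2p^2+p)\int_0^T g(t)^2\,dt$ with $g(t)=\vert\theta_0(t)\vert+\int\vert\gamma\vert\,\vert\theta_1\vert\,\nu(dx)$, and indeed $\mathbb{E}_{\mathbb{Q}}[(d\widetilde{\mathbb{Q}}/d\mathbb{Q})^{-p}]$ can be infinite for your $\widetilde{\mathbb{Q}}$ even when $\vartheta_{x^q}(\mathbb{Q})<\infty$ (take $\theta_1$ large on a set where $\gamma=0$). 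So the step you yourself flag as the crux does fail as stated.

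The paper avoids this entirely by not insisting that the dual element come from an equivalent local martingale measure: it works in the class $\mathcal{C}\subset\mathcal{Y}_{\mathbb{P}}(1)$ of (\ref{Def._C=Control_Set}) and chooses $\xi^{(1)}:=\theta_1$, the jump coefficient of $\mathbb{Q}$ itself, together with $\xi^{(0)}:=\theta_0-\varepsilon/\beta$, where $\varepsilon_t=\alpha_t+\beta_t\theta_0(t)+\int\gamma\theta_1\,\nu(dx)$. The jump parts then cancel exactly in the ratio, $\mathcal{E}(Z^{\xi})_T/\mathcal{E}(Z^{\theta})_T=\exp\{-\int(\varepsilon/\beta)\,dW'-\tfrac12\int(\varepsilon/\beta)^2\,dt\}$, and the only quantity whose exponential moment must be controlled is $\int_0^T(\varepsilon_t/\beta_t)^2\,dt$ --- precisely what $h_0(x)\geq\vert x\vert$, $h_1(x)\geq\vert x\vert/c$, $\vert\beta\vert\geq c$ and $(A\,3)$ deliver. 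If you want to keep your structure, replace your $\widetilde{\mathbb{Q}}$ by this $\xi$; note that $\mathcal{E}(Z^{\xi})$ need not be the density of a probability measure, which is exactly why the supermartingale class $\mathcal{Y}_{\mathbb{Q}}(y)$ (via Schied's Lemma 4.2), rather than $\mathcal{Q}_{elmm}(\mathbb{Q})$, is the right domain for the dual bound here. Finally, you should also record that $\vartheta_{x^q}$ is bounded from below and is the minimal penalty of a normalized, sensitive convex risk measure (Theorem \ref{theta=minimal penalty function}), since the results (vii)--(xii) you invoke are stated under that hypothesis.
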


\begin{proof}
The penalty function $\vartheta _{x^{q}}$ is bounded from below, and by
Theorem \ref{theta=minimal penalty function} it is the minimal penalty
function of the normalized and sensitive convex measure of risk defined in $%
\left( \ref{Static_CMR_induced_by_phi}\right) $. Therefore, we only need to
prove that condition $\left( \ref{vQ(y)<oo_for_y>0_Q_nice}\right) $ holds.
In order to prove that, fix an arbitrary probability measure $\mathbb{Q}\in 
\mathcal{Q}_{\approx }^{\vartheta _{x^{q}}}=\left\{ \mathbb{Q}\approx 
\mathbb{P}:\vartheta _{x^{q}}\left( \mathbb{Q}\right) <\infty \right\} $ and
let $\theta =\left( \theta _{0},\theta _{1}\right) $ be the corresponding
coefficients obtained in Lemma \ref{Q<<P =>}.

$\left( 1\right) $ In Lemma 4.2, Schied \cite{Schd 2007} establishes that
even for $\mathbb{Q}\in \mathcal{Q}_{\ll }$, with density process $D$, the
next equivalence holds 
\begin{equation*}
Y\in \mathcal{Y}_{\mathbb{Q}}\left( y\right) \Leftrightarrow YD\in \mathcal{Y%
}_{\mathbb{P}}\left( y\right) .
\end{equation*}%
Therefore, for $\mathbb{Q}\in \mathcal{Q}_{\ll }^{\vartheta _{x^{q}}}$, with
coefficient $\theta =\left( \theta _{0},\theta _{1}\right) $, it follows
that 
\begin{equation*}
\begin{array}{cll}
v_{\mathbb{Q}}\left( y\right) & \equiv \inf_{Y\in \mathcal{Y}_{\mathbb{Q}%
}\left( y\right) }\left\{ \mathbb{E}_{\mathbb{Q}}\left[ V\left( Y_{T}\right) %
\right] \right\} &  \\ 
& =\inf_{Y\in \mathcal{Y}_{\mathbb{P}}\left( 1\right) }\left\{ \mathbb{E}_{%
\mathbb{Q}}\left[ V\left( y\frac{Y_{T}}{D_{T}^{\mathbb{Q}}}\right) \right]
\right\} & \leq \inf_{\xi \in \mathcal{C}}\left\{ \mathbb{E}_{\mathbb{Q}}%
\left[ V\left( y\frac{\mathcal{E}\left( Z^{\xi }\right) _{T}}{\mathcal{E}%
\left( Z^{\theta }\right) _{T}}\right) \right] \right\} .%
\end{array}%
\end{equation*}

$\left( 2\right) $ Define 
\begin{equation*}
\varepsilon _{t}:=\alpha _{t}+\beta _{t}\theta _{0}\left( t\right)
+\int\limits_{\mathbb{R}_{0}}\gamma \left( t,x\right) \theta _{1}\left(
t,x\right) \nu \left( dx\right),
\end{equation*}%
the process involved in the definition of the class $\mathcal{C}$ in (\ref%
{Def._C=Control_Set}).

When $\varepsilon _{t}$ is identically zero for all $t>0$, Proposition \ref%
{Q ELMM iff} implies that $\mathbb{Q}\in \mathcal{Q}_{elmm}.$ However, for $%
\mathbb{Q}\in \mathcal{Q}_{elmm}$ the constant process $Y\equiv y$ belongs
to $\mathcal{Y}_{\mathbb{Q}}\left( y\right) $, and it follows that $v_{%
\mathbb{Q}}\left( y\right) <\infty ,$ for all $y>0$. In this case the proof
is concluded.

If $\varepsilon $ is not identically zero, consider $\xi _{t}^{\left(
0\right) }:=\theta _{0}\left( t\right) -\frac{\varepsilon _{t}}{\beta _{t}}$
and $\xi ^{\left( 1\right) }:=\theta _{1}.$ Since%
\begin{equation*}
\infty >\vartheta _{x^{q}}\left( \mathbb{Q}\right) \geq \mathbb{E}_{\mathbb{Q%
}}\left[ \int\limits_{0}^{T}\left( \frac{1}{\beta _{t}}\int\nolimits_{%
\mathbb{R}_{0}}\gamma \left( t,x\right) \theta _{1}\left( t,x\right) \nu
\left( dx\right) \right) ^{2}dt\right] -T,
\end{equation*}%
it follows that $\left\{ \frac{1}{\beta _{t}}\int\nolimits_{\mathbb{R}%
_{0}}\gamma \left( t,x\right) \theta _{1}\left( t,x\right) \nu \left(
dx\right) \right\} _{t\in \left[ 0,T\right] }\in \mathcal{L}\left( W^{\prime
}\right) $ for $W^{\prime }$ a $\mathbb{Q}$-Wiener process and thus also $%
\xi ^{\left( 0\right) }\in \mathcal{L}\left( W^{\prime }\right) .$ Moreover,
for $\xi =\left( \xi ^{\left( 0\right) },\xi ^{\left( 1\right) }\right) $ we
have that $\mathcal{E}\left( Z^{\xi }\right) \in \mathcal{C}$.

Using Girsanov's theorem, we obtain $\frac{\mathcal{E}\left( Z^{\xi }\right)
_{t}}{\mathcal{E}\left( Z^{\theta }\right) _{t}}=\exp \left\{
\int\nolimits_{]0,t]}\left( -\frac{\varepsilon _{u}}{\beta _{u}}\right)
dW_{u}^{\prime }-\frac{1}{2}\int\nolimits_{]0,t]}\left( \frac{\varepsilon
_{u}}{\beta _{u}}\right) ^{2}du\right\} $.

$\left( 3\right) $ The Cauchy-Bunyakovsky-Schwarz inequality yields 
\begin{equation}
\begin{array}{rl}
\mathbb{E}_{\mathbb{Q}}\left[ V\left( y\frac{\mathcal{E}\left( Z^{\xi
}\right) _{T}}{\mathcal{E}\left( Z^{\theta }\right) _{T}}\right) \right] = & 
\frac{1}{p}y^{-p}\mathbb{E}_{\mathbb{Q}}\left[ \exp \left\{
p\int\limits_{]0,T]}\left( \frac{\varepsilon _{t}}{\beta _{t}}\right)
dW^{\prime }+\frac{p}{2}\int\limits_{]0,T]}\left( \frac{\varepsilon _{t}}{%
\beta _{t}}\right) ^{2}dt\right\} \right] \\ 
\leq & \frac{1}{p}y^{-p}\mathbb{E}_{\mathbb{Q}}\left[ \exp \left\{
2p\int\limits_{]0,T]}\left( \frac{\varepsilon _{t}}{\beta _{t}}\right)
dW^{\prime }-\frac{4p^{2}}{2}\int\limits_{]0,T]}\left( \frac{\varepsilon _{t}%
}{\beta _{t}}\right) ^{2}dt\right\} \right] ^{\frac{1}{2}} \\ 
& \times \mathbb{E}_{\mathbb{Q}}\left[ \exp \left\{ \left( \frac{4p^{2}}{2}%
+p\right) \int\limits_{]0,T]}\left( \frac{\varepsilon _{t}}{\beta _{t}}%
\right) ^{2}dt\right\} \right] ^{\frac{1}{2}}.%
\end{array}
\label{FiniteV}
\end{equation}

On the other hand, the process 
\begin{equation*}
\exp \left\{ 2p\int\limits_{]0,T]}\left( \frac{\varepsilon _{t}}{\beta _{t}}%
\right) dW^{\prime }-\frac{4p^{2}}{2}\int\limits_{]0,T]}\left( \frac{%
\varepsilon _{t}}{\beta _{t}}\right) ^{2}dt\right\} \in \mathcal{M}%
_{loc}\left( \mathbb{Q}\right)
\end{equation*}%
is a local $\mathbb{Q}$-martingale and, since it is positive, is a
supermartingale. Hence, 
\begin{equation*}
\mathbb{E}_{\mathbb{Q}}\left[ \exp \left\{ 2p\int\limits_{]0,T]}\left( \frac{%
\varepsilon _{t}}{\beta _{t}}\right) dW^{\prime }-\frac{4p^{2}}{2}%
\int\limits_{]0,T]}\left( \frac{\varepsilon _{t}}{\beta _{t}}\right)
^{2}dt\right\} \right] \leq 1.
\end{equation*}%
Finally, observe that for $\mathbb{Q}\in \mathcal{Q}_{\ll }^{\vartheta
_{x^{q}}}$, using that it has finite penalization $\vartheta _{x^{q}}\left( 
\mathbb{Q}\right) <\infty $ and Jensen's inequality, we have 
\begin{eqnarray*}
\infty &>&\mathbb{E}_{\mathbb{Q}}\left[ \exp \left\{ \frac{\kappa _{1}}{T}%
\int\limits_{0}^{T}\left( h_{0}\left( \theta _{0}\left( t\right) \right)
+\int\limits_{\mathbb{R}_{0}}\left\vert \gamma \left( t,x\right) \right\vert
h_{1}\left( \theta _{1}\left( t,x\right) \right) \nu \left( dx\right)
\right) ^{2}dt\right\} \right] \\
&\geq &\mathbb{E}_{\mathbb{Q}}\left[ \exp \left\{ 2\left( 2p^{2}+p\right)
\int\limits_{0}^{T}\left( \left\vert \theta _{0}\left( t\right) \right\vert +%
\frac{1}{\left\vert \beta _{t}\right\vert }\left\vert \int\nolimits_{\mathbb{%
R}_{0}}\gamma \left( t,x\right) \theta _{1}\left( t,x\right) \nu \left(
dx\right) \right\vert \right) ^{2}dt\right\} \right] .
\end{eqnarray*}%
From the last two displays it follows that the r.h.s. of (\ref{FiniteV}) is
finite and the theorem follows.
\end{proof}

Next theorem establishes a sufficient condition for the existence of a
solution to the robust utility maximization problem $\left( \ref%
{robust_probl._primal_value_funct}\right) $ for an arbitrary utility
function.

\begin{theorem}
\label{General_Utility:_Existence} Suppose that the utility function $%
\widetilde{U}$ is bounded above by a power utility $U$, with penalty
function $\vartheta _{x^{q}}$ associated to $U$ as in Theorem \ref%
{Power_utility:_Existence}. Then, the robust utility maximization problem $%
\left( \ref{robust_probl._primal_value_funct}\right) $ for $\widetilde{U}$
with penalty $\vartheta _{x^{q}}$ has an optimal solution.
\end{theorem}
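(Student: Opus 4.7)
The plan is to reduce the statement to an application of the general existence theory recalled in items (vii)--(xii) of the introduction, via a monotone comparison of the convex conjugates. Since $\vartheta_{x^q}$ has already been identified in Theorem~\ref{theta=minimal penalty function} as the minimal penalty function of a normalized and sensitive convex risk measure (and is in particular bounded below), the only hypothesis that needs to be verified for $\widetilde{U}$ is the finiteness condition
\begin{equation*}
\widetilde{v}_{\mathbb{Q}}(y) < \infty \quad \text{for all } \mathbb{Q}\in \mathcal{Q}^{\vartheta_{x^q}}_{\approx}\ \text{and all } y>0,
\end{equation*}
where $\widetilde{v}_{\mathbb{Q}}$ is formed from the convex conjugate $\widetilde{V}$ of $\widetilde{U}$.

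First I would pass the domination hypothesis through Fenchel--Legendre duality. From $\widetilde{U}\leq U$ on $(0,\infty)$ one obtains, by taking the supremum in $x$ of $\widetilde{U}(x)-xy$ versus $U(x)-xy$, the pointwise inequality $\widetilde{V}(y)\leq V(y)$ for every $y>0$. Since the extension of $\mathbb{E}_{\mathbb{Q}}$ to $\mathcal{L}^0$ in $(\ref{Eq[X]_extention})$ is monotone, it follows that for any $Y\in \mathcal{Y}_{\mathbb{Q}}(y)$ we have $\mathbb{E}_{\mathbb{Q}}[\widetilde{V}(Y_T)]\leq \mathbb{E}_{\mathbb{Q}}[V(Y_T)]$, and therefore $\widetilde{v}_{\mathbb{Q}}(y)\leq v_{\mathbb{Q}}(y)$.

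Next, for every $\mathbb{Q}\in \mathcal{Q}_{\approx}^{\vartheta_{x^q}}$ and every $y>0$, the conclusion of Theorem~\ref{Power_utility:_Existence}, whose proof shows precisely that condition $(\ref{vQ(y)<oo_for_y>0_Q_nice})$ holds for $U(x)=x^q/q$, gives $v_{\mathbb{Q}}(y)<\infty$; combining this with the inequality of the previous step yields $\widetilde{v}_{\mathbb{Q}}(y)<\infty$. Thus condition $(\ref{vQ(y)<oo_for_y>0_Q_nice})$ is satisfied for the utility $\widetilde{U}$ with the same penalty $\vartheta_{x^q}$, and the general robust duality listed as items (vii)--(xii) in the introduction applies: the primal problem $(\ref{robust_probl._primal_value_funct})$ for $\widetilde{U}$ admits an optimal $\widetilde{X}^{\ast}\in \mathcal{X}(x)$ (together with a dual optimizer $(\widetilde{\mathbb{Q}}^{\ast},\widetilde{Y}^{\ast})$). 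I do not expect a serious obstacle here, since the comparison $\widetilde{V}\leq V$ is the only structural ingredient linking the two problems; the only point that requires care is that one must work with the extended expectation $(\ref{Eq[X]_extention})$ so that the inequality $\widetilde{v}_{\mathbb{Q}}\leq v_{\mathbb{Q}}$ remains valid even when $\widetilde{V}(Y_T)$ is not $\mathbb{Q}$-integrable from below, and that the hypothesis $\widetilde{U}\leq U$ is genuinely used only through the \emph{positive part} $\widetilde{V}^+\leq V^+$, which is what controls finiteness of the dual value.
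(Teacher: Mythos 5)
Your proposal is correct and follows essentially the same route as the paper: pass the domination $\widetilde{U}\leq U$ through Fenchel--Legendre duality to get $\widetilde{V}^{+}\leq V^{+}$, invoke the finiteness of $v_{\mathbb{Q}}(y)$ established in the power-utility theorem, and conclude that condition $\left( \ref{vQ(y)<oo_for_y>0_Q_nice}\right)$ holds for $\widetilde{U}$ so that the general duality machinery applies. The only cosmetic difference is that the paper deduces $\mathbb{E}_{\mathbb{Q}}[\widetilde{V}^{+}(Y_{T})]<\infty$ by comparing the generalized inverses of $V^{+}$ and $\widetilde{V}^{+}$ and applying the moments lemma $\mathbb{E}_{\mathbb{Q}}[X]<\infty \Leftrightarrow \sum_{n}\mathbb{Q}[\left\vert X\right\vert \geq n]<\infty$, whereas you obtain the same bound directly from monotonicity of the extended expectation, which is simpler and equally valid.
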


\begin{proof}
Since $U\left( x\right) :=\frac{1}{q}x^{-q}\geq \widetilde{U}\left( x\right) 
$ for all $x>0$, for some $q\in \left( -\infty ,1\right) \backslash \left\{
0\right\} $ the corresponding convex conjugate functions satisfy $V\left(
y\right) \geq \widetilde{V}\left( y\right) $ for each $y>0.$ As it was
pointed out in Remark \ref{V<0=>Solution exist}, we can restrict ourself to
the positive part $\widetilde{V}^{+}\left( y\right) .$ From Proposition \ref%
{Power_utility:_Existence}, we can fix some $Y\in \mathcal{Y}_{\mathbb{Q}%
}\left( y\right) $ such that $\mathbb{E}_{\mathbb{Q}}\left[ V\left(
Y_{T}\right) \right] <\infty $ for any $\mathbb{Q\in }\mathcal{Q}_{\approx
}^{\vartheta _{x^{q}}}$ and $y>0$, arbitrary, but fixed. Furthermore, the
inequality $V\left( y\right) \geq \widetilde{V}\left( y\right) $ implies
that their inverse functions satisfy $\left( V^{+}\right) ^{\left( -1\right)
}\left( n\right) \geq \left( \widetilde{V}^{+}\right) ^{\left( -1\right)
}\left( n\right) $ for all $n\in \mathbb{N}$, and hence 
\begin{equation*}
\sum_{n=1}^{\infty }\mathbb{Q}\left[ Y_{T}\leq \left( \widetilde{V}%
^{+}\right) ^{\left( -1\right) }\left( n\right) \right] \leq
\sum_{n=1}^{\infty }\mathbb{Q}\left[ Y_{T}\leq \left( V^{+}\right) ^{\left(
-1\right) }\left( n\right) \right] <\infty .
\end{equation*}%
The moments Lemma ($\mathbb{E}_{\mathbb{Q}}\left[ X\right] <\infty
\Leftrightarrow \sum_{n=1}^{\infty }\mathbb{Q}\left[ \left\vert X\right\vert
\geq n\right] <\infty $) yields $\mathbb{E}_{\mathbb{Q}}\left[ \widetilde{V}%
^{+}\left( Y_{T}\right) \right] <\infty $, and the assertion follows.
\end{proof}

\begin{example}
\label{Log-utility:_dominated_by_power}The logarithm utility function
satisfies conditions of Theorem \ref{General_Utility:_Existence}. However,
this case will be studied more deeply in Section \ref{Sect:_log_utility_case}%
, since the techniques involve interesting arguments related to the relative
entropy.
\end{example}

From the proof of Theorem \ref{General_Utility:_Existence} it is clear that
the behavior of the convex conjugate function in a neighborhood of zero is
fundamental. From this observation we conclude the following.

\begin{corollary}
Let $U$ be a utility function with convex conjugate $V$, and $\vartheta $ a
penalization function such that the robust utility maximization problem (\ref%
{robust_probl._primal_value_funct}) has a solution. For a utility function $%
\widetilde{U}$ such that their convex conjugate function $\widetilde{V}$ is
majorized in an $\varepsilon $-neighborhood of zero by $V$, the
corresponding utility maximization problem (\ref%
{robust_probl._primal_value_funct}) has a solution.
\end{corollary}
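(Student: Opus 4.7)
The plan is to reduce the statement to the verification of condition (\ref{vQ(y)<oo_for_y>0_Q_nice}) for $\widetilde{V}$, exactly as in the proof of Theorem \ref{General_Utility:_Existence}. Once this is done, properties (vii)--(xii) listed in the introduction produce the desired solution for the robust problem associated with $\widetilde{U}$ and $\vartheta$.

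First I would recall that $\widetilde{V}$ is continuous and decreasing on $(0,\infty)$, so for every $\varepsilon>0$ the constant $C:=\max\bigl(\widetilde{V}(\varepsilon),0\bigr)$ is finite, and the monotonicity of $\widetilde{V}$ gives the uniform bound $\widetilde{V}^{+}(y)\leq C$ on $[\varepsilon,\infty)$. On the complementary interval $(0,\varepsilon)$ the hypothesis $\widetilde{V}(y)\leq V(y)$ yields $\widetilde{V}^{+}(y)\leq V^{+}(y)$. Combining both ranges I obtain the global inequality
\begin{equation*}
\widetilde{V}^{+}(y)\leq V^{+}(y)+C,\qquad y>0.
\end{equation*}

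Next I would fix an arbitrary $\mathbb{Q}\in\mathcal{Q}_{\approx}^{\vartheta}$ and $y>0$. Since by hypothesis the problem associated with $U$ and $\vartheta$ has a solution, condition (\ref{vQ(y)<oo_for_y>0_Q_nice}) holds for $V$, so there exists $Y\in\mathcal{Y}_{\mathbb{Q}}(y)$ with $\mathbb{E}_{\mathbb{Q}}[V(Y_{T})]<\infty$, in particular $\mathbb{E}_{\mathbb{Q}}[V^{+}(Y_{T})]<\infty$. Using the global bound above for this same $Y$,
\begin{equation*}
\widetilde{v}_{\mathbb{Q}}(y)\leq \mathbb{E}_{\mathbb{Q}}[\widetilde{V}(Y_{T})]\leq \mathbb{E}_{\mathbb{Q}}[\widetilde{V}^{+}(Y_{T})]\leq \mathbb{E}_{\mathbb{Q}}[V^{+}(Y_{T})]+C<\infty,
\end{equation*}
which verifies (\ref{vQ(y)<oo_for_y>0_Q_nice}) for $\widetilde{V}$. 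Since $\vartheta$ is, by assumption in the preceding theorems, the minimal penalty of a normalized and sensitive convex risk measure bounded from below, properties (xi)--(xii) furnish the optimal solution to the robust problem (\ref{robust_probl._primal_value_funct}) for $\widetilde{U}$.

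I do not expect any substantial obstacle here: the only subtlety is that the hypothesis only controls $\widetilde{V}$ near the origin, where it may blow up, while for $y$ bounded away from zero the monotonicity of $\widetilde{V}$ already gives a free upper bound. Once this trivial splitting is carried out, the argument is essentially identical to the final step of the proof of Theorem \ref{General_Utility:_Existence}, and invokes the same machinery (Lemma 4.2 of \cite{Schd 2007} implicitly through $\mathcal{Y}_{\mathbb{Q}}(y)$ and the general duality results (vii)--(xii)).
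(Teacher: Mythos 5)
Your proposal is correct and is essentially the argument the paper intends: the paper gives no separate proof of the corollary, merely remarking that only the behavior of the conjugate near zero matters, and your splitting of $(0,\infty)$ at $\varepsilon$ (monotonicity of $\widetilde{V}$ giving a uniform bound on $[\varepsilon,\infty)$, domination by $V$ on $(0,\varepsilon)$) followed by verification of condition (\ref{vQ(y)<oo_for_y>0_Q_nice}) is exactly the intended reduction to the proof of Theorem \ref{General_Utility:_Existence}. The only cosmetic difference is that you conclude via the direct pointwise bound $\widetilde{V}^{+}\leq V^{+}+C$ and linearity of expectation, whereas the theorem's proof routes through the moments lemma $\sum_{n}\mathbb{Q}[\,\cdot\,]<\infty$; both rest on the same implicit step $\mathbb{E}_{\mathbb{Q}}[V(Y_{T})]<\infty\Rightarrow\mathbb{E}_{\mathbb{Q}}[V^{+}(Y_{T})]<\infty$ that the paper itself uses.
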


Next we give an alternative representation of the robust dual value
function, introduced in (\ref{robust_probl_dual_value_funct}), in terms of
the family $\mathcal{C}$ of stochastic processes.

\begin{theorem}
\label{Model_1_robust dual value function_(Thm)}For a utility function $U$
satisfying condition $\left( \ref{vQ(y)<oo_for_y>0_Q_nice}\right) $, the
dual value function can be written as 
\begin{equation}
\begin{array}{cl}
v\left( y\right) & =\inf\limits_{\mathbb{Q}\in \mathcal{Q}_{\approx
}^{\vartheta }}\left\{ \inf_{\xi \in \mathcal{C}}\left\{ \mathbb{E}_{\mathbb{%
Q}}\left[ V\left( y\tfrac{\mathcal{E}\left( Z^{\xi }\right) _{T}}{D_{T}^{%
\mathbb{Q}}}\right) \right] \right\} +\vartheta \left( \mathbb{Q}\right)
\right\} \\ 
& =\inf\limits_{\mathbb{Q}\in \mathcal{Q}_{\ll }}\left\{ \inf_{\xi \in 
\mathcal{C}}\left\{ \mathbb{E}_{\mathbb{Q}}\left[ V\left( y\tfrac{\mathcal{E}%
\left( Z^{\xi }\right) _{T}}{D_{T}^{\mathbb{Q}}}\right) \right] \right\}
+\vartheta \left( \mathbb{Q}\right) \right\} .%
\end{array}
\label{Model_1_robust dual value function (Eqt)}
\end{equation}
\end{theorem}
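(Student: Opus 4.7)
The plan is to bridge the abstract class $\mathcal{Y}_\mathbb{Q}(y)$ with the parametric family $\{\mathcal{E}(Z^\xi):\xi\in\mathcal{C}\}$ by chaining Schied's Lemma 4.2, the restriction-to-elmm part of Kramkov--Schachermayer's assertion (iii), and the characterization in Proposition \ref{Q ELMM iff}. Fix $\mathbb{Q}\in\mathcal{Q}_{\ll}^{\vartheta}$ with density process $D^\mathbb{Q}$. Schied's Lemma 4.2, already employed in step (1) of the proof of Theorem \ref{Power_utility:_Existence}, asserts $Y\in\mathcal{Y}_\mathbb{Q}(y)\Leftrightarrow YD^\mathbb{Q}\in\mathcal{Y}_\mathbb{P}(y)$, so changing variables $\tilde Y=YD^\mathbb{Q}/y$ gives
\begin{equation*}
v_\mathbb{Q}(y)=\inf_{\tilde Y\in\mathcal{Y}_\mathbb{P}(1)}\mathbb{E}_\mathbb{Q}\!\left[V\!\left(y\,\tfrac{\tilde Y_T}{D^\mathbb{Q}_T}\right)\right].
\end{equation*}
Since $\mathcal{C}\subset\mathcal{Y}_\mathbb{P}(1)$, restricting this infimum yields $v_\mathbb{Q}(y)\le\inf_{\xi\in\mathcal{C}}\mathbb{E}_\mathbb{Q}[V(y\,\mathcal{E}(Z^\xi)_T/D^\mathbb{Q}_T)]$ for every $\mathbb{Q}\in\mathcal{Q}_{\ll}^{\vartheta}$; adding $\vartheta(\mathbb{Q})$ and taking outer infima produces the chain
\begin{equation*}
v(y)\le\inf_{\mathbb{Q}\in\mathcal{Q}_{\ll}}\!\left\{\inf_{\xi\in\mathcal{C}}\mathbb{E}_\mathbb{Q}\!\left[V\!\left(y\,\tfrac{\mathcal{E}(Z^\xi)_T}{D^\mathbb{Q}_T}\right)\right]+\vartheta(\mathbb{Q})\right\}\le\inf_{\mathbb{Q}\in\mathcal{Q}_{\approx}^{\vartheta}}\!\left\{\cdots\right\}.
\end{equation*}

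For the reverse direction I first establish the pointwise identity $v_\mathbb{Q}(y)=\inf_{\xi\in\mathcal{C}}\mathbb{E}_\mathbb{Q}[V(y\,\mathcal{E}(Z^\xi)_T/D^\mathbb{Q}_T)]$ for $\mathbb{Q}\in\mathcal{Q}_{\approx}^{\vartheta}$. Equivalence $\mathbb{Q}\approx\mathbb{P}$ forces $\mathcal{Q}_{elmm}(\mathbb{Q})=\mathcal{Q}_{elmm}(\mathbb{P})\neq\varnothing$, while the standing hypothesis (\ref{vQ(y)<oo_for_y>0_Q_nice}) secures $v_\mathbb{Q}(y)<\infty$, so Kramkov--Schachermayer's assertion (iii), applied with $\mathbb{Q}$ as reference measure, reduces the dual to
\begin{equation*}
v_\mathbb{Q}(y)=\inf_{\tilde{\mathbb{Q}}\in\mathcal{Q}_{elmm}(\mathbb{P})}\mathbb{E}_\mathbb{Q}\!\left[V\!\left(y\,\tfrac{d\tilde{\mathbb{Q}}/d\mathbb{P}}{d\mathbb{Q}/d\mathbb{P}}\right)\right].
\end{equation*}
By Proposition \ref{Q ELMM iff}, the densities $d\tilde{\mathbb{Q}}/d\mathbb{P}$ coincide with Dol\'eans--Dade exponentials $\mathcal{E}(Z^\xi)_T$ whose parameter $\xi$ belongs to $\mathcal{C}$ (and is in addition a true $\mathbb{P}$-martingale); this subfamily is contained in $\mathcal{C}$, giving $v_\mathbb{Q}(y)\ge\inf_{\xi\in\mathcal{C}}\mathbb{E}_\mathbb{Q}[V(\cdots)]$ and thus completing the pointwise equality. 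Summing over $\mathcal{Q}_{\approx}^{\vartheta}$ produces $\inf_{\mathcal{Q}_{\approx}^{\vartheta}}\{v_\mathbb{Q}(y)+\vartheta(\mathbb{Q})\}=\inf_{\mathcal{Q}_{\approx}^{\vartheta}}\{\inf_{\xi\in\mathcal{C}}\cdots+\vartheta(\mathbb{Q})\}$.

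To close the sandwich it remains to show that restricting the outer infimum from $\mathcal{Q}_{\ll}^{\vartheta}$ to $\mathcal{Q}_{\approx}^{\vartheta}$ does not increase the value of $v_\mathbb{Q}(y)+\vartheta(\mathbb{Q})$. I would argue by approximation: for an arbitrary $\mathbb{Q}\in\mathcal{Q}_{\ll}^{\vartheta}$, the convex combinations $\mathbb{Q}_\varepsilon:=(1-\varepsilon)\mathbb{Q}+\varepsilon\mathbb{P}$ lie in $\mathcal{Q}_{\approx}^{\vartheta}$, convexity of $\vartheta$ yields $\vartheta(\mathbb{Q}_\varepsilon)\le(1-\varepsilon)\vartheta(\mathbb{Q})+\varepsilon\vartheta(\mathbb{P})$ (with $\vartheta(\mathbb{P})=0$ since $\rho$ is normalized), and monotone/dominated convergence permits passage to the limit $\varepsilon\downarrow 0$ in the inner integrand to recover $v_\mathbb{Q}(y)+\vartheta(\mathbb{Q})$. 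This limit is the main obstacle: Kramkov--Schachermayer's (iii) is tailored to equivalent reference measures, so the pointwise identification of $v_\mathbb{Q}(y)$ with the $\mathcal{C}$-infimum does not carry over to $\mathbb{Q}\ll\mathbb{P}$ strict, and securing the uniform integrability of $V(y\,\mathcal{E}(Z^\xi)_T/D^{\mathbb{Q}_\varepsilon}_T)$ together with the regularity of $\vartheta(\mathbb{Q}_\varepsilon)$ needed to close the limit is the delicate point. Once either this approximation or an alternative localization argument extending the elmm restriction to absolutely continuous measures is in place, the two displayed formulas follow simultaneously.
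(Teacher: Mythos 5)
Your architecture is the right one, and roughly half of it coincides with the paper's argument: the upper bound $v(y)\le\inf_{\mathbb{Q}\in\mathcal{Q}_{\ll}}\{\inf_{\xi\in\mathcal{C}}\mathbb{E}_{\mathbb{Q}}[V(y\,\mathcal{E}(Z^{\xi})_{T}/D_{T}^{\mathbb{Q}})]+\vartheta(\mathbb{Q})\}\le\inf_{\mathbb{Q}\in\mathcal{Q}_{\approx}^{\vartheta}}\{\cdots\}$ via Schied's Lemma 4.2 and $\mathcal{C}\subset\mathcal{Y}_{\mathbb{P}}(1)$ is exactly the paper's closing step, and your use of $\mathcal{D}_{elmm}(\mathbb{P})\subset\mathcal{C}$ to bound the inner infimum is also how the paper proceeds (the paper does not even need your pointwise identity $v_{\mathbb{Q}}(y)=\inf_{\xi\in\mathcal{C}}\{\cdots\}$ for each equivalent $\mathbb{Q}$; the one-sided inclusion at the aggregated level suffices).

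The genuine gap is the one you flag yourself: the passage of the outer infimum from $\mathcal{Q}_{\ll}^{\vartheta}$ to $\mathcal{Q}_{\approx}^{\vartheta}$, i.e.\ the inequality $\inf_{\mathbb{Q}\in\mathcal{Q}_{\approx}^{\vartheta}}\{v_{\mathbb{Q}}(y)+\vartheta(\mathbb{Q})\}\le v(y)$. Your convex-combination scheme $\mathbb{Q}_{\varepsilon}=(1-\varepsilon)\mathbb{Q}+\varepsilon\mathbb{P}$ handles the penalty term correctly (convexity of the minimal penalty and $\vartheta(\mathbb{P})=0$), but the required upper semicontinuity $\limsup_{\varepsilon\downarrow0}v_{\mathbb{Q}_{\varepsilon}}(y)\le v_{\mathbb{Q}}(y)$ is not a routine dominated-convergence statement: the dual domains $\mathcal{Y}_{\mathbb{Q}_{\varepsilon}}(y)$ vary with $\varepsilon$, and $V$ is unbounded near $0$, so no single integrable majorant is available without further work. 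The paper does not attempt this; it invokes Lemma 4.4 of Schied (2007) together with Theorem 2 of Kramkov--Schachermayer (2003), which under hypothesis (\ref{vQ(y)<oo_for_y>0_Q_nice}) deliver directly the identity $v(y)=\inf_{\mathbb{Q}\in\mathcal{Q}_{\approx}^{\vartheta}}\{\inf_{\widetilde{\mathbb{Q}}\in\mathcal{Q}_{elmm}(\mathbb{Q})}\mathbb{E}_{\mathbb{Q}}[V(y\,d\widetilde{\mathbb{Q}}/d\mathbb{Q})]+\vartheta(\mathbb{Q})\}$, after which the sandwich closes in two lines. To repair your proof, replace the unfinished approximation by a citation of that lemma (whose proof is precisely the nontrivial approximation you are gesturing at); as written, the argument is incomplete at its load-bearing step.
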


\begin{proof}
Condition $\left( \ref{vQ(y)<oo_for_y>0_Q_nice}\right) $, together with
Lemma 4.4 in \cite{Schd 2007} and Theorem 2 in \cite{Krk&Scha 2003} , imply
the following identity%
\begin{equation*}
v\left( y\right) =\inf\limits_{\mathbb{Q}\in \mathcal{Q}_{\approx
}^{\vartheta }}\left\{ \inf\nolimits_{\widetilde{\mathbb{Q}}\in \mathcal{Q}%
_{elmm}\left( \mathbb{Q}\right) }\left\{ \mathbb{E}_{\mathbb{Q}}\left[
V\left( yd\widetilde{\mathbb{Q}}/d\mathbb{Q}\right) \right] \right\}
+\vartheta \left( \mathbb{Q}\right) \right\} .
\end{equation*}%
Since $\mathcal{D}_{elmm}\left( \mathbb{P}\right) \subset \mathcal{C}$, we
get 
\begin{equation}
\begin{array}{cl}
v\left( y\right) & \geq \inf\limits_{\mathbb{Q}\in \mathcal{Q}_{\approx
}^{\vartheta }}\left\{ \inf\limits_{\xi \in \mathcal{C}}\left\{ \mathbb{E}_{%
\mathbb{Q}}\left[ V\left( y\frac{\mathcal{E}\left( Z^{\xi }\right) _{T}}{%
D_{T}^{\mathbb{Q}}}\right) \right] \right\} +\vartheta \left( \mathbb{Q}%
\right) \right\} \\ 
& \geq \inf\limits_{\mathbb{Q}\in \mathcal{Q}_{\ll }}\left\{
\inf\nolimits_{\xi \in \mathcal{C}}\left\{ \mathbb{E}_{\mathbb{Q}}\left[
V\left( y\tfrac{\mathcal{E}\left( Z^{\xi }\right) _{T}}{D_{T}^{\mathbb{Q}}}%
\right) \right] \right\} +\vartheta \left( \mathbb{Q}\right) \right\} .%
\end{array}
\label{robust dual value function >(Ineqt)}
\end{equation}%
Finally, from Lemma 4.2 in Schied \cite{Schd 2007} and $\mathcal{C}\subset 
\mathcal{Y}_{\mathbb{P}}\left( 1\right) $ follows 
\begin{equation*}
v_{\mathbb{Q}}\left( y\right) \leq \inf_{\xi \in \mathcal{C}}\left\{ \mathbb{%
E}_{\mathbb{Q}}\left[ V\left( y\frac{\mathcal{E}\left( Z^{\xi }\right) _{T}}{%
D_{T}^{\mathbb{Q}}}\right) \right] \right\} ,
\end{equation*}%
and we have the inequalities $\left( \ref{robust dual value function
>(Ineqt)}\right) $ in the other direction, and the result follows.
\end{proof}

\subsection{The logarithmic utility case\label{Sect:_log_utility_case}}

As it was pointed out above in Example \ref{Log-utility:_dominated_by_power}%
, the existence of a solution to the dual problem for the logarithmic
utility function $U\left( x\right) =\log \left( x\right) $ can be read from
the results presented in the previous subsection. However, the nature of the
optimization problem arising in the case of a logarithmic utility deserves a
deeper study. Let $h,h_{0}$ and $h_{1}$ be as in Subsection \ref%
{MinimalPenalties}, satisfying also the following growth conditions: 
\begin{eqnarray*}
h\left( x\right) &\geq &x, \\
h_{0}\left( x\right) &\geq &\frac{1}{2}x^{2}, \\
h_{1}\left( x\right) &\geq &\left\{ \left\vert x\right\vert \vee x\ln \left(
1+x\right) \right\} \mathbf{1}_{\left( -1,0\right) }\left( x\right) +x\left(
1+x\right) \mathbf{1}_{\mathbb{R}_{+}}\left( x\right) .
\end{eqnarray*}%
Now, define the penalization function 
\begin{equation}
\begin{array}{rl}
\vartheta _{\log }\left( \mathbb{Q}\right) := & \mathbb{E}_{\mathbb{Q}}\left[
\int\limits_{0}^{T}h\left( h_{0}\left( \theta _{0}\left( t\right) \right)
+\int\nolimits_{\mathbb{R}_{0}}h_{1}\left( \theta _{1}\left( t,x\right)
\right) \nu \left( dx\right) \right) dt\right] \mathbf{1}_{\mathcal{Q}_{\ll
}}\left( \mathbb{Q}\right) \\ 
& +\infty \times \mathbf{1}_{\mathcal{Q}_{cont}\setminus \mathcal{Q}_{\ll
}}\left( \mathbb{Q}\right) .%
\end{array}
\label{Theta_log_utility}
\end{equation}

\begin{remark}
\label{penalty(Q)<oo_=>_integrability} Notice that when $\mathbb{Q}\in 
\mathcal{Q}_{\ll }^{\vartheta _{\log }}(\mathbb{P})$ with coefficient $%
\theta =\left( \theta _{0},\theta _{1}\right) $ has a finite penalization,
the following $\mathbb{Q}$-integrability properties hold: 
\begin{equation*}
\begin{array}{ll}
\left( \ref{penalty(Q)<oo_=>_integrability}.i\right) & \int\limits_{\left[
0,T\right] \times \mathbb{R}_{0}}\theta _{1}\left( t,x\right) \mu _{\mathbb{P%
}}^{\mathcal{P}}\left( dt,dx\right) \in \mathcal{L}^{1}\left( \mathbb{Q}%
\right) \\ 
\left( \ref{penalty(Q)<oo_=>_integrability}.ii\right) & \int\limits_{\left[
0,T\right] \times \mathbb{R}_{0}}\left\{ 1+\theta _{1}\left( t,x\right)
\right\} \ln \left( 1+\theta _{1}\left( t,x\right) \right) \mu _{\mathbb{P}%
}^{\mathcal{P}}\left( dt,dx\right) \in \mathcal{L}^{1}\left( \mathbb{Q}%
\right) \\ 
\left( \ref{penalty(Q)<oo_=>_integrability}.iii\right) & \int\limits_{\left[
0,T\right] \times \mathbb{R}_{0}}\ln \left( 1+\theta _{1}\left( s,x\right)
\right) \mu \left( ds,dx\right) \in \mathcal{L}^{1}\left( \mathbb{Q}\right)
\\ 
\left( \ref{penalty(Q)<oo_=>_integrability}.iv\right) & \mathbb{E}_{\mathbb{Q%
}}\left[ \tint\limits_{]0,T]\times \mathbb{R}_{0}}\ln \left( 1+\theta
_{1}\right) d\mu \right] =\mathbb{E}_{\mathbb{Q}}\left[ \tint\limits_{]0,T]%
\times \mathbb{R}_{0}}\left\{ \ln \left( 1+\theta _{1}\right) \right\}
\left( 1+\theta _{1}\right) d\mu _{\mathbb{P}}^{\mathcal{P}}\right] \\ 
\smallskip &  \\ 
\smallskip & \text{In addition, for }\mathbb{Q}\in \mathcal{Q}_{\approx
}^{\vartheta _{\log }}(\mathbb{P})\text{ we have} \\ 
\smallskip &  \\ 
\left( \ref{penalty(Q)<oo_=>_integrability}.v\right) & \int\limits_{\left[
0,T\right] \times \mathbb{R}_{0}}\theta _{1}\left( s,x\right) \mu \left(
ds,dx\right) <\infty \ \mathbb{P}-a.s.%
\end{array}%
\end{equation*}
\end{remark}

For $\mathbb{Q}\in \mathcal{Q}_{\ll }(\mathbb{P})$, the relative entropy
function is defined as 
\begin{equation*}
H(\mathbb{Q}|\mathbb{P})=\mathbb{E}\left[ D_{T}^{\mathbb{Q}}\log \left(
D_{T}^{\mathbb{Q}}\right) \right] .
\end{equation*}

\begin{lemma}
\label{H(P|Q)<Penalty(Q)} Given $\mathbb{Q\in }\mathcal{Q}_{\approx
}^{\vartheta _{\log }}\left( \mathbb{P}\right) $, it follows that 
\begin{equation*}
H\left( \mathbb{Q}\left\vert \mathbb{P}\right. \right) \leq \vartheta _{\log
}\left( \mathbb{Q}\right) .
\end{equation*}
\end{lemma}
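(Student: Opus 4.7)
The plan is to compute $H(\mathbb{Q}|\mathbb{P})$ explicitly using the Dol\'eans–Dade representation of $D_{T}^{\mathbb{Q}}$, and then dominate each of the resulting terms by the corresponding term in $\vartheta_{\log}(\mathbb{Q})$ via the postulated growth hypotheses on $h$, $h_{0}$, $h_{1}$.

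Starting from formula (\ref{D(t) explicita}), I would take the logarithm to obtain
\begin{equation*}
\log D_{T}^{\mathbb{Q}} = \int_{]0,T]}\theta_{0}\,dW - \tfrac{1}{2}\int_{]0,T]}\theta_{0}^{2}\,ds + \int_{]0,T]\times\mathbb{R}_{0}}\ln(1+\theta_{1})\,d\mu - \int_{]0,T]\times\mathbb{R}_{0}}\theta_{1}\,\nu(dx)\,ds,
\end{equation*}
and then pass to the $\mathbb{Q}$-expectation using $H(\mathbb{Q}|\mathbb{P})=\mathbb{E}_{\mathbb{Q}}[\log D_{T}^{\mathbb{Q}}]$. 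The key tool here is Girsanov: $W'_{t}:=W_{t}-\int_{0}^{t}\theta_{0}\,ds$ is a $\mathbb{Q}$-Wiener process, so $\int\theta_{0}\,dW=\int\theta_{0}\,dW'+\int\theta_{0}^{2}\,ds$; a standard localization argument (together with the quadratic integrability $\mathbb{E}_{\mathbb{Q}}[\int_{0}^{T}\theta_{0}^{2}\,ds]<\infty$, which follows from $h_{0}(x)\ge\tfrac{1}{2}x^{2}$ and $h(x)\ge x$) shows $\mathbb{E}_{\mathbb{Q}}[\int\theta_{0}\,dW']=0$. For the jump part I would invoke Remark~\ref{penalty(Q)<oo_=>_integrability}(iv) to rewrite $\mathbb{E}_{\mathbb{Q}}[\int\ln(1+\theta_{1})\,d\mu]=\mathbb{E}_{\mathbb{Q}}[\int(1+\theta_{1})\ln(1+\theta_{1})\,d\mu_{\mathbb{P}}^{\mathcal{P}}]$, and use $\mu_{\mathbb{P}}^{\mathcal{P}}(dt,dx)=\nu(dx)\,dt$. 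Combining these I expect to arrive at
\begin{equation*}
H(\mathbb{Q}|\mathbb{P}) = \tfrac{1}{2}\,\mathbb{E}_{\mathbb{Q}}\!\left[\int_{0}^{T}\theta_{0}^{2}(t)\,dt\right] + \mathbb{E}_{\mathbb{Q}}\!\left[\int_{0}^{T}\!\int_{\mathbb{R}_{0}}\bigl\{(1+\theta_{1})\ln(1+\theta_{1})-\theta_{1}\bigr\}\nu(dx)\,dt\right].
\end{equation*}

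Next I would bound $\vartheta_{\log}(\mathbb{Q})$ from below, term by term. Since $h(y)\ge y$ for $y\ge 0$ and the argument of $h$ is nonnegative,
\begin{equation*}
\vartheta_{\log}(\mathbb{Q}) \ge \mathbb{E}_{\mathbb{Q}}\!\left[\int_{0}^{T}h_{0}(\theta_{0})\,dt\right] + \mathbb{E}_{\mathbb{Q}}\!\left[\int_{0}^{T}\!\int_{\mathbb{R}_{0}}h_{1}(\theta_{1})\,\nu(dx)\,dt\right].
\end{equation*}
The Brownian term is handled by $h_{0}(x)\ge\tfrac{1}{2}x^{2}$. For the jump term it suffices to check that the postulated lower bound on $h_{1}$ dominates the convex function $\phi(x):=(1+x)\ln(1+x)-x$ on $(-1,\infty)$. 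For $x\ge 0$, $\ln(1+x)\le x$ gives $\phi(x)\le x^{2}\le x(1+x)$; for $x\in(-1,0)$, $(1+x)\ln(1+x)\le 0$ implies $\phi(x)\le -x=|x|$. Either case is covered by the hypothesis on $h_{1}$.

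The main technical obstacle will be justifying that $\int\theta_{0}\,dW'$ has zero $\mathbb{Q}$-expectation and that the various interchanges of expectation and integration are legitimate. Here is where I would lean heavily on Remark~\ref{penalty(Q)<oo_=>_integrability}: items (i)–(iv) supply precisely the $\mathbb{Q}$-integrability needed both to apply the change-of-compensator identity for the jump piece and to deduce, from $\vartheta_{\log}(\mathbb{Q})<\infty$, that $\int_{0}^{T}\theta_{0}^{2}\,dt\in\mathcal{L}^{1}(\mathbb{Q})$; the latter is what rules out pathological behavior of the continuous martingale part. Once those checks are in place, chaining the three displayed inequalities yields $H(\mathbb{Q}|\mathbb{P})\le\vartheta_{\log}(\mathbb{Q})$.
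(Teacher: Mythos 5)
Your argument is correct and follows essentially the same route as the paper's proof: compute $H(\mathbb{Q}|\mathbb{P})=\mathbb{E}_{\mathbb{Q}}[\log D_T^{\mathbb{Q}}]$ from the explicit density formula, absorb the Brownian part via Girsanov into $\tfrac12\mathbb{E}_{\mathbb{Q}}[\int_0^T\theta_0^2\,dt]$, convert the jump integral with Remark~\ref{penalty(Q)<oo_=>_integrability}(iv), and dominate term by term using $h(x)\ge x$, $h_0(x)\ge\tfrac12 x^2$ and the growth condition on $h_1$. The only cosmetic difference is that the paper bounds $(1+x)\ln(1+x)-x$ uniformly by $x\ln(1+x)$ while you split into the cases $x\in(-1,0)$ and $x\ge 0$; both are covered by the hypothesis on $h_1$.
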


\begin{proof}
For $\mathbb{Q\in }\mathcal{Q}_{\approx }^{\vartheta _{\log }}\left( \mathbb{%
P}\right) $, Remark \ref{penalty(Q)<oo_=>_integrability} implies that 
\begin{eqnarray*}
H\left( \mathbb{Q}\left\vert \mathbb{P}\right. \right) &=&\mathbb{E}_{%
\mathbb{Q}}\left[ \frac{1}{2}\int\nolimits_{0}^{T}\left( \theta _{0}\right)
^{2}ds+\int\limits_{]0,T]\times \mathbb{R}_{0}}\ln \left( 1+\theta
_{1}\left( s,x\right) \right) \mu \left( ds,dx\right)
-\int\limits_{0}^{T}\int\limits_{\mathbb{R}_{0}}\theta _{1}\left( s,x\right)
\nu \left( dx\right) ds\right] \\
&\leq &\mathbb{E}_{\mathbb{Q}}\left[ \int\limits_{0}^{T}\left\{ \frac{1}{2}%
\left( \theta _{0}\right) ^{2}ds+\int\limits_{\mathbb{R}_{0}}\left\{ \ln
\left( 1+\theta _{1}\left( s,x\right) \right) \right\} \theta _{1}\left(
s,x\right) \nu \left( dx\right) \right\} ds\right] \\
&\leq &\vartheta _{\log }\left( \mathbb{Q}\right) .
\end{eqnarray*}
\end{proof}

\smallskip\ 

\begin{lemma}
\label{U(x)=log(x)_=>_vQ(y)<oo (Lemma)} Let $U\left( x\right) =\log \left(
x\right) $ and $\vartheta _{\log }$ be as in $\left( \ref{Theta_log_utility}%
\right) $. Then the robust utility maximization problem $\left( \ref%
{robust_probl._primal_value_funct}\right) $ has an optimal solution.
\end{lemma}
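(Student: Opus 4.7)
The overall plan is to reduce the existence of an optimal primal solution to the finiteness condition $v_{\mathbb{Q}}(y) < \infty$ for every $\mathbb{Q} \in \mathcal{Q}_{\approx}^{\vartheta_{\log}}$ and every $y > 0$; once that is established, assertions (vii)--(xii) of the Introduction yield the optimizer. The penalty $\vartheta_{\log}$ is bounded below (since $h,h_0,h_1 \geq 0$) and, by Theorem \ref{theta=minimal penalty function}, is the minimal penalty function of the normalized and sensitive convex risk measure it induces, so the general framework applies verbatim once the finiteness has been shown.

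To verify it, I would fix $\mathbb{Q} \in \mathcal{Q}_{\approx}^{\vartheta_{\log}}$ with coefficients $(\theta_0,\theta_1)$ supplied by Lemma \ref{Q<<P =>}, and take the element $\xi = (\xi^{(0)},\xi^{(1)}) \in \mathcal{C}$ that kills the drift of $Y$ purely through the Brownian channel: $\xi^{(0)} := -\alpha/\beta$ and $\xi^{(1)} := 0$. Assumption $(A\ 3)$ places $\xi^{(0)}$ in $\mathcal{L}(W)$, and the identity $\alpha + \beta \xi^{(0)} = 0$ shows $\xi \in \mathcal{C}$. Since $\mathcal{C} \subset \mathcal{Y}_{\mathbb{P}}(1)$, Lemma 4.2 of Schied \cite{Schd 2007} gives $Y := y\,\mathcal{E}(Z^{\xi})/D^{\mathbb{Q}} \in \mathcal{Y}_{\mathbb{Q}}(y)$. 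For $V(z) = -\log z - 1$ this yields
\begin{equation*}
v_{\mathbb{Q}}(y) \leq \mathbb{E}_{\mathbb{Q}}[V(Y_T)] = -\log y - 1 + H(\mathbb{Q}|\mathbb{P}) - \mathbb{E}_{\mathbb{Q}}[\log \mathcal{E}(Z^{\xi})_T],
\end{equation*}
and Lemma \ref{H(P|Q)<Penalty(Q)} immediately gives $H(\mathbb{Q}|\mathbb{P}) \leq \vartheta_{\log}(\mathbb{Q}) < \infty$. The task reduces to showing $\mathbb{E}_{\mathbb{Q}}[\log \mathcal{E}(Z^{\xi})_T] > -\infty$.

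Because $\xi^{(1)} = 0$, the Dol\'eans--Dade exponential $\mathcal{E}(Z^{\xi})$ is purely continuous, and Girsanov applied to the $\mathbb{Q}$-Brownian motion $W' := W - \int \theta_0\, ds$ yields
\begin{equation*}
\log \mathcal{E}(Z^{\xi})_T = -\int_0^T \tfrac{\alpha_s}{\beta_s}\, dW'_s - \int_0^T \tfrac{\alpha_s}{\beta_s}\,\theta_0(s)\, ds - \tfrac{1}{2}\int_0^T (\alpha_s/\beta_s)^2\, ds.
\end{equation*}
By $(A\ 3)$ the integral $\int_0^T(\alpha_s/\beta_s)^2 ds$ belongs to $\mathcal{L}^{\infty}(\mathbb{P}) = \mathcal{L}^{\infty}(\mathbb{Q})$: this bounds the deterministic term and makes the stochastic integral an $L^2$-bounded $\mathbb{Q}$-martingale with zero mean at $T$. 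For the cross term, pathwise Cauchy--Schwarz combined with the penalty-induced bound $\mathbb{E}_{\mathbb{Q}}[\int_0^T \theta_0^2 ds] \leq 2\vartheta_{\log}(\mathbb{Q})$ (derived from $h(x) \geq x$ and $h_0(x) \geq x^2/2$) gives a finite bound on $\mathbb{E}_{\mathbb{Q}}\bigl[\bigl|\int_0^T(\alpha/\beta)\,\theta_0\, ds\bigr|\bigr]$. Hence $\mathbb{E}_{\mathbb{Q}}[\log \mathcal{E}(Z^{\xi})_T]$ is finite and $v_{\mathbb{Q}}(y) < \infty$.

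The main subtlety I expect is verifying that the Girsanov-transformed stochastic integral is a \emph{true} $\mathbb{Q}$-martingale rather than merely a local one, a point that rests essentially on the $\mathcal{L}^{\infty}$ bound from $(A\ 3)$. A secondary conceptual remark: the simple choice $\xi^{(1)} = 0$ is preferable here to the symmetric choice $\xi^{(1)} = \theta_1$ used in the power-utility argument, because keeping $\mathcal{E}(Z^{\xi})$ continuous sidesteps any control of $\bigl(\int \gamma\theta_1\,\nu(dx)\bigr)^2$, which would otherwise appear in the analogue of $(\varepsilon/\beta)^2$ and does not obviously admit a clean bound under $\vartheta_{\log}$ alone.
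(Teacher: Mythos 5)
Your proposal is correct and follows essentially the same route as the paper: the same reduction to condition (\ref{vQ(y)<oo_for_y>0_Q_nice}), the same choice $\widetilde{\xi}=(-\alpha/\beta,\,0)\in\mathcal{C}$, the same splitting of $\mathbb{E}_{\mathbb{Q}}[\log(D_T^{\mathbb{Q}}/\mathcal{E}(Z^{\widetilde{\xi}})_T)]$ into $H(\mathbb{Q}|\mathbb{P})$ plus the cross and quadratic drift terms, controlled via Lemma \ref{H(P|Q)<Penalty(Q)} and $(A\,3)$. You merely make explicit (via Cauchy--Schwarz and the bound $\mathbb{E}_{\mathbb{Q}}[\int_0^T\theta_0^2\,ds]\leq 2\vartheta_{\log}(\mathbb{Q})$) the finiteness of the cross term that the paper asserts without detail.
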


\begin{proof}
Again, we only need to verify that condition (\ref{vQ(y)<oo_for_y>0_Q_nice})
holds. Observe that for $\mathbb{Q\in }\mathcal{Q}_{\ll }$ fix we have that 
\begin{equation*}
v_{\mathbb{Q}}\left( y\right) \leq \inf_{\xi \in \mathcal{C}}\left\{ \mathbb{%
E}\left[ D_{T}^{\mathbb{Q}}\log \left( \frac{D_{T}^{\mathbb{Q}}}{\mathcal{E}%
\left( Z^{\xi }\right) _{T}}\right) -\log \left( y\right) -1\right] \right\}
.
\end{equation*}%
Also, Proposition \ref{Q ELMM iff} and the Novikov condition yield for $%
\widetilde{\xi }\in \mathcal{C}$, with $\widetilde{\xi }^{\left( 0\right)
}:=-\dfrac{\alpha _{s}}{\beta _{s}}$ and $\widetilde{\xi }^{\left( 1\right)
}:=0$, that $\widetilde{\mathbb{Q}}\in \mathcal{Q}_{elmm}$, where $d%
\widetilde{\mathbb{Q}}\backslash d\mathbb{P}=D_{T}^{\widetilde{\xi }}:=%
\mathcal{E}\left( Z^{\widetilde{\xi }}\right) _{T}.$ Further, from Lemma \ref%
{H(P|Q)<Penalty(Q)} we conclude for $\mathbb{Q\in }\mathcal{Q}_{\approx
}^{\vartheta _{\log }}\left( \mathbb{P}\right) $ that 
\begin{equation*}
\mathbb{E}\left[ D_{T}^{\mathbb{Q}}\log \left( \frac{D_{T}^{\mathbb{Q}}}{%
D_{T}^{\widetilde{\xi }}}\right) \right] =H\left( \mathbb{Q}\left\vert 
\mathbb{P}\right. \right) +\mathbb{E}_{\mathbb{Q}}\left[ \int\limits_{0}^{T}%
\frac{\alpha _{s}}{\beta _{s}}\theta _{s}^{\left( 0\right) }ds+\frac{1}{2}%
\int\limits_{0}^{T}\left( \dfrac{\alpha _{s}}{\beta _{s}}\right) ^{2}ds%
\right] <\infty
\end{equation*}%
and the claim follows.
\end{proof}

\newpage

\addcontentsline{toc}{section}{References}

\end{document}